\newcommand{\ket}[1]{| #1\rangle}        % ket vector
\newcommand{\bra}[1]{\langle #1|}        % bra vector
\newcommand{\kets}[1]{| #1 \rangle}        % small ket vector
\newcommand{\ketbra}[2]{| #1 \rangle\!\langle #2 |} % <x|y>
\newcommand{\ii}{\mathbb{I}}		% the I with two vertical lines
\newcommand{\norm}[1]{\left\| #1\right\|}        % norm
\newcommand{\norms}[1]{\| #1\|}        % small lines norm
\newtheorem{theorem}{Theorem}
\newcommand{\eq}[1]{Eq.~\hyperref[eq:#1]{(\ref*{eq:#1})}}
\renewcommand{\sec}[1]{\hyperref[sec:#1]{Section~\ref*{sec:#1}}}
\newcommand{\app}[1]{\hyperref[app:#1]{Appendix~\ref*{app:#1}}}
\newcommand{\tab}[1]{\hyperref[tab:#1]{Table~\ref*{tab:#1}}}
\newcommand{\fig}[1]{\hyperref[fig:#1]{Figure~\ref*{fig:#1}}}
\newcommand{\figa}[2]{\hyperref[fig:#1]{Figure~\ref*{fig:#1}#2}}
\newcommand{\figx}[2]{\hyperref[fig:#1]{Figure~\ref*{fig:#1}(#2)}}
\newcommand{\thm}[1]{\hyperref[thm:#1]{Theorem~\ref*{thm:#1}}}
\newcommand{\lem}[1]{\hyperref[lem:#1]{Lemma~\ref*{lem:#1}}}
\newcommand{\cor}[1]{\hyperref[cor:#1]{Corollary~\ref*{cor:#1}}}
\newcommand{\defn}[1]{\hyperref[def:#1]{Definition~\ref*{def:#1}}}
\newcommand{\alg}[1]{\hyperref[alg:#1]{Algorithm~\ref*{alg:#1}}}
\newcommand{\prob}[1]{\hyperref[prob:#1]{Problem~\ref*{prob:#1}}}
\newcommand{\threatM}[1]{\hyperref[threat:#1]{Threat Model \ref*{threat:#1}}}
\newcommand{\op}[1]{\operatorname{#1}}
\begin{document}
	\title{Trading T gates for dirty qubits in state preparation and unitary synthesis}
	
	\author{Guang Hao Low}
	\affiliation{Quantum Architectures and Computation, Microsoft Research, Washington, Redmond, USA}
        \affiliation{Azure Quantum, Microsoft, Washington, Redmond, USA}
	\author{Vadym Kliuchnikov}
	\affiliation{Quantum Architectures and Computation, Microsoft Research, Washington, Redmond, USA}
         \affiliation{Azure Quantum, Microsoft, Washington, Redmond, USA}
	\author{Luke Schaeffer}
	\affiliation{Quantum Architectures and Computation, Microsoft Research, Washington, Redmond, USA}
	\affiliation{Department of Electrical Engineering and 
 Computer Science, Massachusetts Institute of Technology, Cambridge, Massachusetts, USA}
 \affiliation{Joint Center for Quantum Information and Computer Science, University of Maryland, Maryland, College Park, USA}

	\date{2024-06-11}% It is always \today, today,
	%  but any date may be explicitly specified
	
	\begin{abstract}
		Efficient synthesis of arbitrary quantum states and unitaries from a universal fault-tolerant gate-set e.g. Clifford+$\textsc{T}$ is a key subroutine in quantum computation. As large quantum algorithms feature many qubits that encode coherent quantum information but remain idle for parts of the computation, these should be used if it minimizes overall gate counts, especially that of the expensive $\textsc{T}$-gates. We present a quantum algorithm for preparing any dimension-$N$ pure quantum state specified by a list of $N$ classical numbers, that realizes a trade-off between space and $\textsc{T}$-gates. Our scheme uses $\mathcal{O}(\log{(N/\epsilon)})$ clean qubits and a tunable number of $\sim(\lambda\log{(\frac{\log{N}}{\epsilon})})$ dirty qubits, to reduce the $\textsc{T}$-gate cost to $\mathcal{O}(\frac{N}{\lambda}+\lambda\log{\frac{N}{\epsilon}}\log{\frac{\log{N}}{\epsilon}})$. This trade-off is optimal up to logarithmic factors, proven through an unconditional gate counting lower bound, and is, in the best case, a quadratic improvement in $\textsc{T}$-count over prior ancillary-free approaches. We prove similar statements for unitary synthesis by reduction to state preparation. Underlying our constructions is a $\textsc{T}$--efficient circuit implementation of a quantum oracle for arbitrary classical data. 
	\end{abstract}
	
	%	\pacs{Valid PACS appear here}% PACS, the Physics and Astronomy
	% Classification Scheme.
	%\keywords{Suggested keywords}%Use showkeys class option if keyword
	%display desired
	\maketitle
	\tableofcontents
	\section{Introduction} 
 Many quantum algorithms require coherent access to classical data, that is, data that can be queried in superposition through a unitary quantum operation. This property is crucial in obtaining quantum speedups for applications such as machine learning~\cite{LloydMohseniRebentrost2014}, simulation of physical systems~\cite{Berry2015Truncated,Low2016HamSim} and solving systems of linear equations~\cite{Harrow2009,Wang2018NonSparse}. 
	
The nature of quantum-encoded classical data is itself varied. For example, quantum data regression~\cite{Wiebe2012Fitting} queries a classical list of $N$ data-points $a_{x}$ through a unitary data-lookup oracle~\cite{Aharonov2003Adiabatic} 
 \begin{align}
     O\ket{x}\ket{0}=\ket{x}\ket{a_x}.
 \end{align}
 Other applications, particularly in quantum chemistry~\cite{Babbush2018encoding} instead access Hamiltonian coefficient data through a unitary $A\ket{0}=\ket{\psi}$ that prepares these numbers as amplitudes in a normalized quantum state, or as probabilities in a purified density matrix. Even more generally, the central challenge is synthesizing some arbitrary unitary $A\in\mathbb{C}^{N\times N}$ of which $k\le N$ columns are either partially or completely specified by a list of complex coefficients that is, say, provided on paper.
	
	Synthesis of these data-access unitaries is typically a dominant factor in the overall algorithm cost. In any scalable approach to quantum computation, all unitaries decompose into a universal fault-tolerant quantum gate set, such as Clifford gates $\{\textsc{H},\textsc S,\textsc{Cnot}\}$ and $\textsc{T}$ gates~\cite{Nielsen2004}. Solovay and Kitaev~\cite{Nielsen2004} were the first to recognize that any single-qubit unitary could be $\epsilon$-approximated using $\mathcal{O}(\log^c{(1/\epsilon)})$ fault-tolerant gates for $c=3.97$, which was later improved to $c=1$ ~\cite{Kliuchnikov2013synthesis,Ross2016Optimal}. By bootstrapping these results, it is well-known that a roughly equal number of $\mathcal{O}(kN\log{(N/\epsilon)})$~\cite{shende2006synthesis} Clifford and non-Clifford gates suffice for arbitrary dimensions. Notably, the total gate count scaling is optimal  in all parameters, following gate-counting arguments~\cite{Harrow2002}.
	
	The possibility that $\textsc{T}$ gates could be substantially fewer in number than the Clifford gates, however, is not excluded by known lower bounds. It is believed that fault-tolerant Clifford gates $\{\textsc{H},\textsc{S},\textsc{Cnot}\}$ will be cheap in most practical implementations of fault-tolerant quantum computation. In contrast, the equivalent cost of each fault-tolerant non-Clifford $\textsc{T}$ gates, implemented at machine precision, is placed at a space-time volume $\approx (225\;\textrm{logical qubits}) \times (10\;\textrm{Clifford depth})$ for realistic estimates~\cite{Litinski2019gameofsurfacecodes} based on $\ket{\textsc{T}}$ magic-state distillation at a physical error rate of $\approx 10^{-3}$. 
 {\renewcommand{\arraystretch}{1.1}
\begin{table*}
\centering
 \setlength\tabcolsep{1.0pt}
\resizebox{\textwidth}{!}{%
    \begin{tabular}{cc|rl|rl|rl|c}
        \multicolumn{2}{c|}{Output}&\multicolumn{2}{c|}{Qubits}&\multicolumn{2}{c|}{$\textsc{T}$ count}&\multicolumn{2}{c|}{$\textsc{T}$ Depth}& Valid $\lambda$\\
        \hline
        \multirow{3}{*}{$\ket{\psi}$} 
        &\cite{shende2006synthesis}&$\log{N}$&&$N\log{\frac{N}{\epsilon}}$&&$N\log{\frac{N}{\epsilon}}$ &&--
        \\
        &\cite{Sun2023State}& $\log{N}$&$+\lambda $& $N\log{\frac{N}{\epsilon}}$ && $\frac{N}{\log{N}+\lambda}\log{\frac{N}{\epsilon}}$&$+\log{N}\log{\frac{N}{\epsilon}}$ & $\mathcal{O}(\frac{N}{\log{N}\log\log{N}})\cup\Omega(N)$
        \\
        %&Gui et al.~\cite{Sun2023State}&$N$&$N\log{\frac{N}{\epsilon}}$&$\log{\frac{N}{\epsilon}}$
        %\\
        &Here & $\log{N}$&$+\lambda$ 
&$\lambda\log{\frac{N}{\epsilon}}$&$+\frac{N}{\lambda}\log{\frac{1}{\epsilon'}}$& $\frac{N}{\lambda}\log{\frac{1}{\epsilon'}}$&$+\log{N}\log{\frac{\lambda}{\epsilon'}}$
& $[\log{\frac{1}{\epsilon'}},N\log{\frac{1}{\epsilon'}}]$
        \\
        \hline
        \multirow{3}{*}{$U$}&\cite{shende2006synthesis}&$\log{N}$&&$N^2\log{\frac{N}{\epsilon}}$&&$N^2\log{\frac{N}{\epsilon}}$&&--
        \\
        &\cite{Sun2023State}& $\log{N}$&$+\lambda$ & $N^2\log{\frac{N}{\epsilon}}$&& $\frac{N^2}{\log{N}+\lambda}\log{\frac{N}{\epsilon}}$&$+N\log{N}\log{\frac{N}{\epsilon}}$& $\mathcal{O}(\frac{N}{\log{N}\log\log{N}})\cup\Omega(N)$
        \\
%&Clader et al.~\cite{Clader2022Blockencode}&$\log{N}$&$N^2\log{\frac{N}{\epsilon}}$&$N^2\log{\frac{N}{\epsilon}}$
        %\\

        &Here & $\log{N}$&$+\lambda$
        &$\lambda K\log{\frac{N}{\epsilon}}$&$+\frac{KN}{\lambda}\log{\frac{K}{\epsilon'}}$
        & $\frac{KN}{\lambda}\log{\frac{K}{\epsilon'}}$&$+K\log{N}\log{\frac{K\lambda}{\epsilon'}}$& $[\log{\frac{K}{\epsilon'}},N\log{\frac{K}{\epsilon'}}]$
        \\
        \hline
        \multirow{2}{*}{$\ket{\psi_\textrm{gb}}$}&\cite{Babbush2018encoding}&$\log{N}$&$+\;\;\log{\frac{1}{\epsilon}}$&$\log{\frac{N}{\epsilon}}$&$+N$&$N$&$+\log{\frac{1}{\epsilon}}$&--
        \\
        &Here& $\log{N}$&$+\lambda\log{\frac{1}{\epsilon}}$  &$\lambda\log{\frac{N}{\epsilon}}$&$+\frac{N}{\lambda}$& $\frac{N}{\lambda}$&$+\log{\frac{\lambda}{\epsilon}}$&$[1,N]$
    \end{tabular}}
    \caption{\label{tab:synthesis_cost_comparison} Big-$\mathcal{O}(\cdot)$ cost of preparing to error $\epsilon$ a dimension $N$ arbitrary quantum state $\ket{\psi}$~\cref{eq:state_prep}, quantum state  with garbage $\ket{\psi_\textrm{gb}}$~\cref{eq:state_prep_garbage}, and an $N\times N$ unitary $U$~\cref{eq:unitary_synthesis} where $K\le N$ columns are fully specified. Above, $\epsilon'=\frac{\epsilon}{\log{N}}$, and we have rescaled $\lambda b\rightarrow\lambda$ where appropriate to facilitate comparison.}
\end{table*}
}
	
	We present an approach to arbitrary quantum state preparation and unitary synthesis that focuses on minimizing the $\textsc{T}$ count. Unique to our approach is the exploitation of a variable number $\mathcal{O}(\lambda\log{(1/\epsilon)})$ of ancillary qubits, in a manner not considered by prior gate-counting arguments or algorithms. We find a $\mathcal{O}(\lambda)$ improvement in the $\textsc{T}$ count and circuit depth while keeping the Clifford count unchanged, excluding logarithmic factors. Most surprisingly, its benefit far exceeds the na\"{i}ve approach of applying these ancillary qubits to producing $\ket{\textsc{T}}$ magic-states for any $\lambda=\mathcal{O}(\sqrt{N})$, as seen in~\cref{tab:synthesis_cost_comparison}. In the best-case, the $\textsc{T}$ count of $\tilde{\mathcal{O}}(\sqrt{N})$ is a square-root factor smaller than prior art, such as for for preparing arbitrary pure states
	\begin{align}
	\label{eq:state_prep}
	\ket{\psi}=\sum_{x=0}^{N-1} \frac{a_x}{\|\vec{a}\|_2} \ket{x},\;\;\;
	\|\vec{a}\|_q=\Big(\sum_{j=0}^{N-1}|a_x|^q\Big)^{1/q},
	\end{align}
	Moreover, we prove this approach realizes an optimal ancillary qubit and $\textsc{T}$ count trade-off up to log factors.

	In particular, our approach is always advantageous as all but a logarithmic number $\mathcal{O}(\log{(N/\epsilon)})$ of qubits, independent of $\lambda$, may be dirty, meaning that they start in, and are returned to the same undetermined initial state. At first glance, the full quadratic speedup is not always desirable as any clean ancillary qubit, initialized in the $\ket{0}$ state, is a resource that may be better allocated to magic-state distillation. However, dirty qubits may not be used for magic-state distillation, and are a resource typically abundant in many algorithms, such as quantum simulation by a linear combination of unitaries~\cite{Berry2015Truncated}. Even in the most pessimistic scenario where no dirty qubits are available, a reduction in the overall execution time of the algorithm, including the effective cost of magic-state distillation, is possible.
	
	We also consider applications of our approach. For instance, a similar speedup to unitary synthesis 
	\begin{align}
	\label{eq:unitary_synthesis}
	U=\left(\sum^K_{k=0}\ket{u_k}\bra{k}\right)+\cdots,
	\end{align}
	where $K\le N$ columns are specified, follows by a well-known reduction based on Householder reflections. Improvements to state preparation with garbage
	\begin{align}
	\label{eq:state_prep_garbage}
	\ket{\psi_\textrm{gb}}=\sum_{x=0}^{N-1} \sqrt{\frac{a_x}{\|\vec{a}\|_1}} \ket{x}\ket{\mathrm{garbage}_x}.
	\end{align}
	relevant to the most advanced quantum simulation techniques~\cite{Low2016Qubitization,Low2017USA,Babbush2018encoding} are also presented in~\cref{sec:purified_density_matrix}.
	
	Underlying our $\textsc{T}$ gate scaling results is an improved implementation of a data-lookup oracle
	\begin{align}
	\label{eq:standard_oracle}
	O\ket{x}\ket{0}\ket{0}=\ket{x}\ket{a_x}\ket{\mathrm{garbage}_x}.
	\end{align}
	Note the attached garbage state may always be uncomputed by applying $O$ in reverse. We begin by describing our implementation of~\cref{eq:standard_oracle}, which we call a `$\textsc{SelectSwap}$' network, with costs outlined in~\cref{tab:cost_comparison}
 %\footnote{The $\textsc{T}$ count of $\textsc{SelectSwap}$ becomes suboptimal for $\lambda=\Omega(\sqrt{N/b})$. 
%
%For such large numbers of qubits, \cref{sec:indicator_function_construction} presents an alternate indicator function implementation that achieves nearly the same qubit--depth trade-off but with optimal $\textsc{T}$ count.
%
%However, we expect that most applications on reasonably-sized quantum computers will be bottlenecked by the production rate of magic states for implementing $\textsc{T}$ gates. Once optimal $\textsc{T}$ count is achieved, excess qubits are better spent on producing magics states to match the increased $\textsc{T}$ consumption rate that comes with a reduced $\textsc{T}$ depth. 
%}.
 Subsequently, we apply $\textsc{SelectSwap}$ to the state preparation problem using the fact that there exists classical data such that preparing any $\ket{\psi}$ requires only $\mathcal{O}(\op{polylog}(N))$ queries and additional primitive quantum gates~\cite{Aaronson2016complexity}. The reduction of unitary synthesis to state preparation is then described. Finally, we prove optimality of our approach through matching lower bounds, and discuss the results. 
	{
		\renewcommand{\arraystretch}{1.1}
		\begin{table}
  \centering
			\begin{tabular}{c|c|c|c}
				\shortstack{Operation\\$\;$}&\shortstack{Qubits\\$\;$}&\shortstack{$\textsc{T}$ count\\ $\le\cdot+\mathcal{O}(\log\cdot)$}&\shortstack{$\textsc{T}$ Depth\\$\mathcal{O}(\cdot)$}\\
				\hline
				$\textsc{Select}$&$b+2\lceil\log_2{N}\rceil$&$4N$&$N$
				\\
				$\textsc{Swap}$&$bN+\lceil\log_2{N}\rceil$&8$bN$&$\log{N}$ 
				\\
				$\textsc{SelSwap}$&$b\lambda +2\lceil\log_2{N}\rceil$&$4\lceil\frac{N}{\lambda}\rceil+8b\lambda$&$\frac{N}{\lambda}+\log{\lambda}$
				\\
				{\cref{fig:select}d}&$b(\lambda+1) +2\lceil\log_2{N}\rceil$&$8\lceil\frac{N}{\lambda}\rceil+32b\lambda$&$\frac{N}{\lambda}+\log{\lambda}$
			\end{tabular}
			\caption{\label{tab:cost_comparison}Upper bounds on cost of possible implementations of the data lookup oracle $O$ of~\cref{eq:standard_oracle}. See~\cref{sec:cnot_log_depth,sec:swap,sec:indicator_function_construction} for other variations, such as a linear-depth phase-incorrect version of \textsc{Swap} using $\le 4bN$ \textsc{T} gates and no additional ancillary qubits. Our results allow for a space-depth trade-off determined by a choice of $\lambda\in[1,N]$, with a minimized $\textsc{T}$ gate complexity of ${\mathcal{O}}(\sqrt{bN})$ by choosing $\lambda=\mathcal{O}(\sqrt{N/b})$. Note that $b\lambda$ qubits of the~\cref{fig:select}d implementation may be dirty.}
		\end{table}
	}

 \begin{figure*}[ht]
		\centering
  
 \setlength\tabcolsep{0.0pt}
		\begin{tabular}[t]{ll}
			\begin{tabular}[t]{ll}
                a) & b) \\
                \includegraphics[trim={0.17cm 0 4.2cm 0},clip,valign=t]{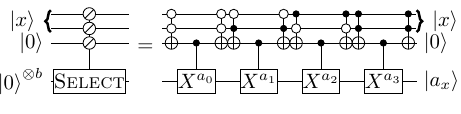} & \qquad\includegraphics[trim={0 0 3.5cm 0},clip,valign=t]{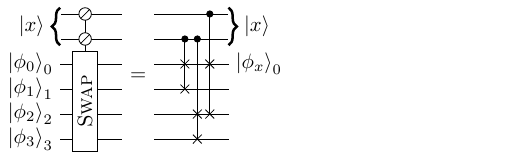} \vspace{0.2cm}\\
				c)&	d) \\
                \includegraphics[trim={0.17cm 0 4.2cm 0},clip,valign=t]{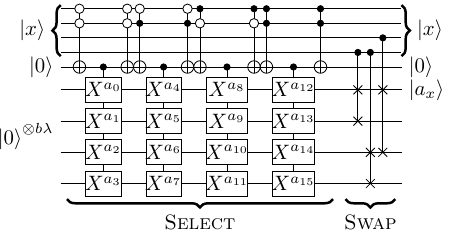}
                & \includegraphics[trim={0 0 3.5cm 0},clip,valign=t]{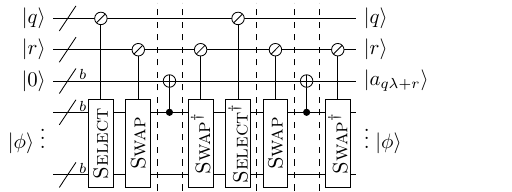}
			\end{tabular}
		\end{tabular}
		\caption{\label{fig:select}(a) Example $\textsc{Select}$ operator $\sum_{x=0}^{N-1}\ket{x}\bra{x}\otimes X^{a_x}$ with $N=4$. The symbol $\oslash$ indicates control by a number state. A naive decomposition of all multiply-controlled-$\textsc{Not}$s requires $\mathcal{O}(N\log{N})$ Clifford+$\textsc{T}$ gates and only one dirty qubit~\cite{Barenco1995gates}. Cancellation of adjacent gates can reduce this to only $\mathcal{O}(N)$~\cite{Childs2018Towards,Babbush2018encoding}, but by using additional $\lceil\log_2 N\rceil$ clean qubits. (b) Example $\textsc{Swap}$ network with $N=4$ using $\mathcal{O}(bN)$ Clifford+$\textsc{T}$. Any arbitrary state $\ket{\phi_x}_x$ in register index $x$ is swapped to the $x=0$ position. (c) The $\textsc{SelectSwap}$ network with $N=16, \lambda=4$ that combines the above two approaches. (d) Modification of $\textsc{SelectSwap}$ network that uses $2\lceil\log_2 N\rceil+b$ clean qubits and $b\lambda$ dirty qubits to implement the data-lookup oracle of~\cref{eq:standard_oracle} without garbage. We omit the $\textsc{Select}$ ancillary qubits for clarity.}
	\end{figure*}
 
	\section{Data-lookup oracle by a $\textsc{SelectSwap}$ network} 
	The unitary data-lookup oracle of~\cref{eq:standard_oracle} accepts an input number state $\ket{x}\in\mathbb{C}^N$ where $x\in\{0,1,\cdots,N-1\}\equiv[N]$, and returns an arbitrary $b$-bit number $a_x\in\{0,1\}^b$. Our approach combines a multiplexer implementation of $O$~\cite{Childs2018Towards}, called $\textsc{Select}$ and a unitary swap network $\textsc{Swap}$, with costs summarized in~\cref{tab:cost_comparison}.
	
	The $\textsc{Select}$ operator applies some arbitrary unitary $U_x$ controlled by the index state $\ket{x}$, that is
	\begin{align}
	\textsc{Select}=\sum_{x=0}^{N-1}\ket{x}\bra{x}\otimes U_x.
	\end{align}
	Thus $O$ is realized by choosing $U_x=X^{a_x}\equiv\otimes^{b-1}_{j=0}X^{a_{x,j}}$~\cite{Babbush2018encoding} to either be identity or the Pauli-$X$ gate depending on the bit string $a_x$. As described in~\cref{fig:select}a, the costs, excluding $\{U_x\}$, is $\mathcal{O}(N)$ Clifford+$\textsc{T}$ gates. As controlled-$X$ is Clifford too, an additional $\mathcal{O}(bN)$ Clifford gates are applied. These $\textsc{Cnot}s$ may be applied in logarithmic depth using an ancillary qubit free quantum fanout discussed in~\cref{sec:cnot_log_depth}.
	
	The unitary $\textsc{Swap}$ network moves a $b$-qubit quantum register indexed by $x$ to the $x=0$ register, controlled by the state $\ket{x}$. For any quantum states $\bigotimes_{x=0}^{N-1}\ket{\phi_x}_x$,
	\begin{align}
	\textsc{Swap}\left[\ket{x}\bigotimes_{x=0}^{N-1}\ket{\phi_x}_x\right]=\ket{x}\ket{\phi_x}_0\otimes \cdots,
	\end{align}
	where the remaining quantum states $(\cdots)$ in registers $x>0$ are unimportant. 
	
	As illustrated in~\cref{fig:select}b, this decomposes into a network of controlled-swap operators. As each controlled-swap operator decomposes into two $\textsc{Cnot}$s and one $\textsc{Toffoli}$, this network uses $\mathcal{O}(bN)$ Clifford+$\textsc{T}$ gates. An ancillary qubit free logarithmic-depth version of $\textsc{Swap}$ is discussed in~\cref{sec:swap}
	
	Our $\textsc{SelectSwap}$ network illustrated in~\cref{fig:select}c is a simple hybrid of the above two schemes. Similar to the $\textsc{Swap}$ approach, we duplicate the $b$-bit register $\lambda$ times, where $\lambda\in\{1,\cdots,N\}$ is an integer. For $\lambda$ that is not a power of $2$, we compute $\ket{x}\rightarrow\ket{q}\ket{r}$, which is the quotient $q=x/\lfloor\lambda\rfloor$ and remainder $r=x\mod{\lambda}$. This contributes an additive cost of $\mathcal{O}(\log{N}\log{\lambda})$ gates. $\textsc{Select}$ is controlled by $\ket{q}$ to write multiple values of $a_x$ simultaneously into these duplicated registers by choosing $U_x=\bigotimes^{\lambda-1}_{j=0} X^{a_{x N/\lambda+j}}$, where $x\in[\lfloor N/\lambda\rfloor]$. $\textsc{Swap}$ is then controlled $\ket{r}$ to move the desired data entry $\ket{a_x}$ to the output register. As the $\textsc{T}$ gate complexity of $\mathcal{O}(\lambda b+\frac{N}{\lambda})$ is determined only by the dimension of the $\textsc{Select}$ and $\textsc{Swap}$ control registers, this is minimized with value $\mathcal{O}(\sqrt{Nb})$ at $\lambda=\mathcal{O}(\sqrt{N/b})$.

	Importantly, all but $b+\lceil\log_2 N\rceil$ of the qubits may be made dirty using a simple modification shown in~\cref{fig:select}d. Then for any computational basis state $\ket{\phi}=\otimes_{r=0}^{\lambda-1}\ket{\phi_r}_r$, and any input state $\ket{x}=\ket{q}\ket{r}$, let us evaluate $\ket{0}\ket{\phi}$ at each dotted line:
	\begin{align}\nonumber
	\ket{0}\ket{\phi}
	&\rightarrow\ket{0}\ket{\phi_r\oplus a_x}_0\cdots
	\rightarrow\ket{\phi_r\oplus a_x}\ket{\phi_r\oplus a_x}_0\cdots\\\nonumber
	&\rightarrow\ket{\phi_r\oplus a_x}\ket{\phi}
	\rightarrow\ket{\phi_r\oplus a_x}\ket{\phi_r}_0\cdots
	\\
	&\rightarrow\ket{a_x}\ket{\phi_r}_0\cdots\rightarrow\ket{a_x}\ket{\phi}
	\end{align}
	By linearity, this is true for all quantum states $\ket{\phi}$.

	As the $\textsc{T}$ gate complexity begins to increase with sufficiently large $\lambda$, one may simply elect to not use excess available dirty qubits. However, continued reduction of the $\textsc{T}$ depth down to $\mathcal{O}(\log{N})$ might be a useful property. In~\cref{sec:indicator_function_construction} we discuss an alternate construction that achieves logarithmic $\textsc{T}$ depth and preserves the quadratic $\textsc{T}$ count improvement for larger $\lambda=\Omega(\sqrt{N})$.
	
	\section{Arbitrary quantum state preparation}
	Preparation of an arbitrary dimension $N=2^n$ quantum state $\ket{\psi}=\frac{1}{\|\vec{a}\|_2}\sum_{x\in\{0,1\}^n} a_x \ket{x}$ using the data-lookup oracle $O$ of~\cref{eq:standard_oracle} is well-known in prior art. The basic idea was introduced by~\cite{Grover2002creating}, and an ancillary-free implementation was presented in~\cite{shende2006synthesis}. We outline the inductive argument of~\cite{Aaronson2016complexity}, and evaluate its cost using our $\textsc{SelectSwap}$ implementation of $O$.
	
	For any bit-string $y\in\{0,1\}^w$ of length $w\le n$, let the probability that the first $w$ qubits of $\ket{\psi}$ are in state $\ket{y}$ be $p_y=\frac{1}{\|\vec{a}\|^2_2}\sum_{\mathrm{prefix}_w(x)=y}|a_x|^2$. Thus a single-qubit rotation $e^{-iY\theta}\ket{0}$ by angle $\theta=\cos^{-1}\sqrt{p_0}$ prepares the state $\ket{\psi_1}=\sqrt{p_0}\ket{0}+\sqrt{p_1}\ket{1}$, where $p_0$ is the probability that the first qubit of $\ket{\psi}$ is in state $\ket{0}$. We then recursively apply single-qubit rotations on the $(w+1)^{\mathrm{th}}$ qubit conditioned on the first $w$ qubits being in state $\ket{y}$. The rotation angles $\theta_y=\cos^{-1}{\sqrt{p_{y0}/p_y}}$ are chosen so that the state produced $\ket{\psi_{w+1}}$ reproduces the correct probabilities on the first $w+1$ qubits. 
	%For instance when $w=1$, we map
	%$
	%\ket{\psi_1}\mapsto \ket{\psi_2}=\sqrt{p_0}\ket{0}
	%\left(\sqrt{\frac{p_{00}}{p_0}}\ket{0}+\sqrt{\frac{p_{01}}{p_0}}\ket{1}\right)
	%+\sqrt{p_1}\ket{1}\left(\sqrt{\frac{p_{10}}{p_1}}\ket{0}+\sqrt{\frac{p_{11}}{p_1}}\ket{1}\right).
	%$
	
	These conditional rotations are implemented using a sequence of data-lookup oracles $O_1,\cdots, O_{n-1}$, where $O_w$ stores a $b$-bit approximation of all $\theta_y$ where $y\in\{0,1\}^w$. At the $w^{\mathrm{th}}$ iteration, 
	\begin{align}\nonumber
	\ket{\psi_w}&=\sum_{y\in\{0,1\}^w}\sqrt{p_y}\ket{y}
	%\\
	\underset{O_w}{\mapsto}
	\sum_{y\in\{0,1\}^w}\sqrt{p_y}\ket{y}\ket{\theta_y}
	\\\nonumber
	&\mapsto
	\sum_{y\in\{0,1\}^w}\sqrt{p_y}\ket{y}\left(\sqrt{\frac{p_{y0}}{p_y}}\ket{0}+\sqrt{\frac{p_{y1}}{p_y}}\ket{1}\right)\ket{\theta_y}
	\\%\nonumber
	&\underset{O^\dag_w}{\mapsto}
	\sum_{y\in\{0,1\}^{w+1}}\sqrt{p_y}\ket{y} 
	= \ket{\psi_{w+1}}.
	\end{align} 
	Note that we omit any garbage registers as they are always uncomputed. Also, the second line is implemented using $b$ single-qubit rotations each controlled by a bit of $\theta_y$. The complex phases of the target state $\ket{\psi}$ are applied to $\ket{\psi_n}$ by a final step with a data-lookup oracle storing $\phi_x=\op{arg}{[a_x/\sqrt{p_x}]}$. Thus $\mathcal{O}(b\log{N})$ single-qubit rotations are applied in total. 
	
	We implement these oracles with the $\textsc{SelectSwap}$ network of~\cref{fig:select}, using a fixed value of $\lambda$ for all $O_k$. A straightforward sum over the $\textsc{T}$ count of~\cref{fig:select} is $\mathcal{O}(b\lambda \log{(N)}+\frac{N}{\lambda})$, which is then added to the total $\textsc{T}$ count of $\mathcal{O}(b\log{(\frac{N}{\delta})})$ for synthesizing all single-qubit rotations each to error $\delta$ using the phase gradient technique~\cite{Gidney2018Addition}, outlined in~\cref{sec:pure_state}. The error of the resulting state $\ket{\psi'}$ produced is determined by the number of bits $b$ used to represent the rotation angles, in addition to rotation synthesis errors $\delta$. Adding these errors leads to
	\begin{align}
	\|\ket{\psi'}-\ket{\psi}\|=\underbrace{\mathcal{O}(\delta)}_{\text{rotation synthesis}}+\underbrace{\mathcal{O}(2^{-b}\log{N})}_{\text{bits of precision}}\le \epsilon,
	\end{align}
	which is bounded by $\epsilon$ with the choice $b=\Theta(\log{(\frac{\log{N}}{\epsilon})})$ and $\delta=\Theta(\epsilon)$. As a function of $\epsilon$, the total $\textsc{T}$ gate complexity is then $\mathcal{O}\left(\frac{N}{\lambda}+b(\lambda n+b)\right)$, where
	\begin{align}\label{eq:state_T_gates_1}
	b(\lambda n+b)
 &=\mathcal{O}\left(\lambda \log{\left(\frac{N}{\epsilon}\right)}\log{\left(\frac{\log{N}}{\epsilon}\right)}\right)\\\label{eq:state_T_gates_2}
 &=\mathcal{O}\left(\lambda \log^2{\left(\frac{N}{\epsilon}\right)}\right),
	\end{align}
	and is plotted in~\cref{fig:TcountStatePrep}.
	\begin{figure}
 \centering
		\includegraphics[width=0.5\columnwidth]{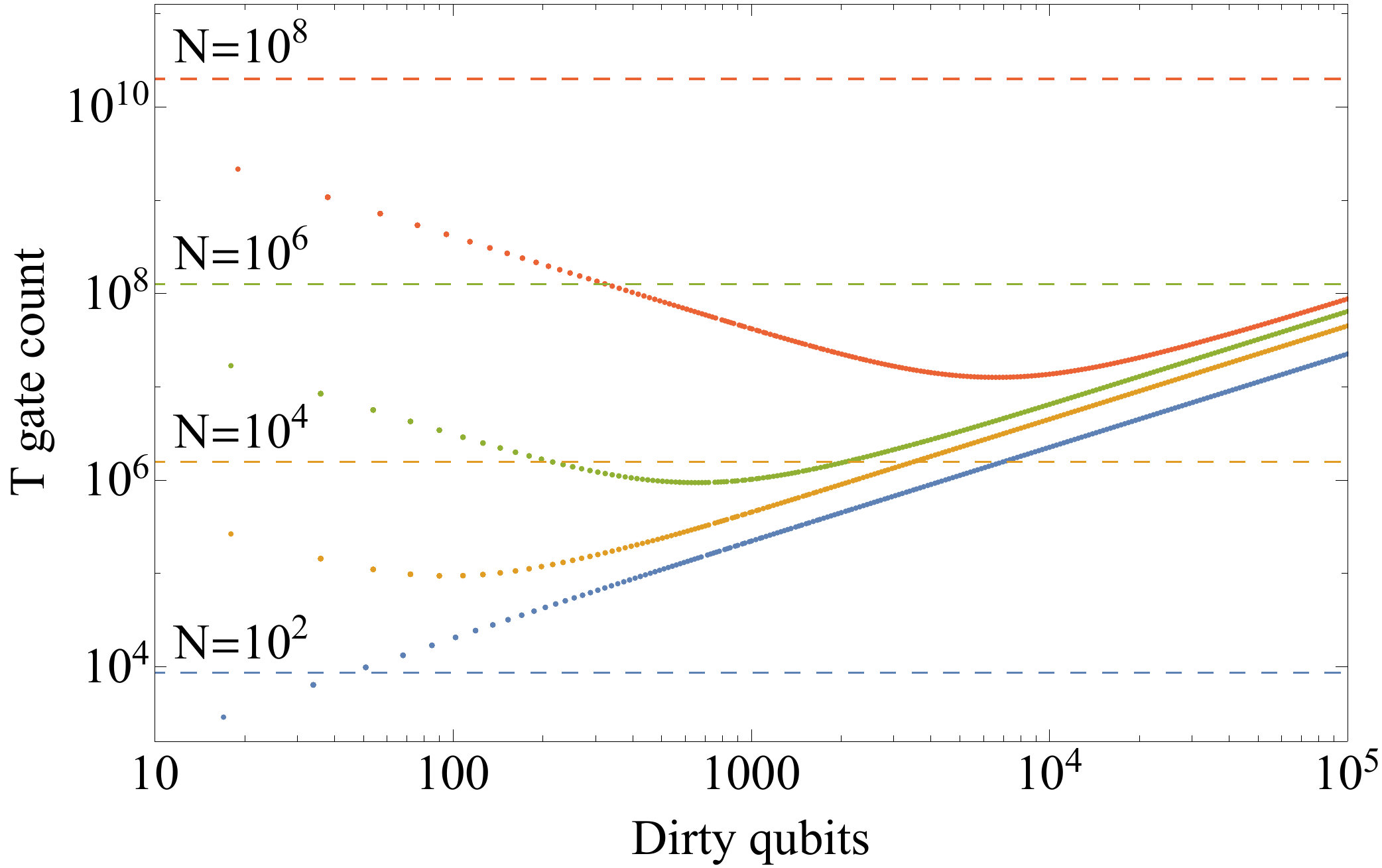}
		\caption{\label{fig:TcountStatePrep}$\textsc{T}$ gate count dependence on number of dirty qubits exploited for approximating an arbitrary quantum state of dimension $N=10^{\{2,4,6,8\}}$ to error $10^{-3}$ using our algorithm (dots) in comparison with the standard ancillary-free approach (dashed)~\cite{shende2006synthesis}. Note that at this error, $b\ge17$ qubits are required to represent each coefficient in binary, but this may be halved by randomization techniques~\cite{low2021halving}. Moreover, one may always use fewer than the maximum number of available dirty qubits.}
	\end{figure}

	\section{Unitary synthesis by state preparation}
	\label{sec:unitary_synthesis}
	The ability to prepare arbitrary quantum states enables synthesis of arbitrary unitaries $U\in\mathbb{C}^{N\times N}$. Given the matrix elements $\{\ket{u_k}\;|\;k\in[K]\}$ for the first $K$ columns of $U$, the 
	isometry synthesis problem is to find a quantum circuit that implements a unitary $V$ that approximates $U$ in the first $K$ columns to error
	$\norms{U-V}_{2,K} = \max _{ x < K } \norm{(U-V) \kets{x}}\le\epsilon$. 
	
	We use the Householder reflections decomposition~\cite{Householder1958} to find a $V$ that is a product of $K$ reflections 
	$\ii - 2\ketbra{v_k}{v_k}$ and a diagonal gate $\mathrm{diag}(e^{i\phi'_1}, \ldots, e^{i \phi'_K}, 1, \ldots, 1)$, for some set of quantum states $\ket{v_k}$. Note that this representation is not unique. The diagonal gate can be eliminated by using one ancillary qubit as discussed in~\cite{Kliuchnikov2013Reflections}. There, it suffices to implement the unitary 
	$W = \ketbra{0}{1}\otimes U + \ketbra{1}{0}\otimes U^\dagger$, which is equal to the product of reflections
	$\ii - 2\ketbra{w_k}{w_k}$ where $\ket{w_k} = (\ket{1}\otimes\ket{k} - \ket{0}\otimes\ket{u_k})/\sqrt{2}$ for $k\in[K]$. 
	
	Given a state preparation unitary $A_k\ket{0}=\ket{u_k}$, one can prepare state $\ket{w_k}$ starting from $\ket{0}\otimes \ket{e_1}$. Apply Hadamard to the first qubit, then a sequence of CNOT gates to prepare $\ket{1}\otimes\ket{k}+\ket{0}\otimes\ket{0}$, and finally apply controlled-$A_k$ negatively controlled on the first qubit. Note that when this method is applied to the synthesis of sparse isometries, the states being synthesized are again sparse.  Moreover, the cost of converting a state into a reflection doubles the number of non-Clifford gates.
	Thus the number of $\textsc{T}$ gates used to synthesize an isometry is twice that for all the Controlled-$A_k$ operations, and scales like $\mathcal{O}\left(K\left(\frac{N}{\lambda}+\lambda \log{(\frac{N}{\epsilon})}\log{(\frac{K\log{N}}{\epsilon})}\right)\right)$. 
	
	\section{Lower bound}
	We prove the optimality of our construction through a circuit counting argument. The most general circuit on $q$ qubits that uses $\Gamma$ $\textsc{T}$-gates has the canonical form~\cite{Gosset2014}
	$%\begin{align}
	C\cdot\prod^\Gamma_{j=1}e^{-i\pi P_j/8},
	$%\end{align} 
	where each $P_j$ is one of $4^q$ possible Pauli operators, and $C$ is one of $2^{\mathcal{O}(q^2)}$ possible Clifford operators. Thus the number of unique quantum circuits is at most 
	\begin{align}
	\label{eq:unique_circuits}
	\text{Unique quantum circuits} = \mathcal{O}( 4^{q \Gamma+\mathcal{O}(q^2)}).
	\end{align} 
	
	A lower bound on the qubit and $\textsc{T}$-gate complexity of the data-lookup oracle of~\cref{eq:standard_oracle} is obtained by counting the number of unique Boolean functions $f:[N]\rightarrow \{0,1\}^b$. As there $2^{b N}$ such functions, we compare with~\cref{eq:unique_circuits}. This leads to a lower bound on the space-\textsc{T} gate product
	\begin{align}
	q\Gamma = \Omega(b N-q^2).
	\end{align} 
	As the $\textsc{SelectSwap}$ complexity in~\cref{fig:select} is $q\Gamma=\mathcal{O}(\lambda^2 b^2+bN+\log{(N)}(1/\lambda+\lambda b))$, this is optimal up to logarithmic factors so long as the number of $\textsc{T}$-gates dominates the qubit count like $\lambda=o(\sqrt{N/b})$, which is the case in most quantum circuits of interest.  
	
	A similar lower bound on state preparation is obtained by counting the number of dimension-$N$ quantum states that a distinguishable with error $\epsilon$. Without loss of generality, we only count quantum states $\ket{\psi}\in\mathbb{R}^N$ with real coefficients. These states live on the surface a unit-ball $\mathbb{B}_N$ of dimension $N$, with area $\operatorname{Area}[\mathbb{B}_N]=\frac{2\pi^{N/2}}{(N/2-1)!}$. Let us now fix a state $\ket{\psi}$. Then the states $\ket{\chi}$ that satisfy $\|\ket{\psi}-\ket{\chi}\|\le \epsilon$ live inside a $\epsilon$-ball $\mathbb{B}_{N-1}$ with volume $\mathcal{O}(\operatorname{Vol}[\mathbb{B}_{N-1}] \epsilon^{N-1})=\mathcal{O}(\frac{\pi^{N/2}}{(N/2)!}\epsilon^{N-1})$~\cite{Mora2005Algorithmic}. Thus there are at least $\Omega(\frac{\operatorname{Area}[\mathbb{B}_N]}{\operatorname{Vol}[\mathbb{B}_{N-1}]\epsilon^{N-1}})=\Omega(\sqrt{N}\epsilon^{-N+1})$ quantum states. Once again by comparing with~\cref{eq:unique_circuits}, we obtain a $T$-gate lower bound of
	\begin{align}
	q \Gamma=\Omega(N\log{(1/\epsilon)}-q^2).
	\end{align}
	This also matches the cost of our approach in~\cref{eq:unique_circuits} up to logarithmic factors, so long as $\lambda = o(\sqrt{N/\log{(1/\epsilon)}})$. The total number of isometries within at least distance $\epsilon$ from each other can also be estimated using Lemma~4.3 on \href{https://arxiv.org/pdf/quant-ph/9508006.pdf#page=15}{Page~14} in \cite{Knill1995ApproxByQCirc}, and is roughly $\Omega((1/\epsilon)^{K N})$. An analogous argument can be made for state preparation with garbage by considering by considering the unit simplex instead of the unit ball. 
	
	Let us now establish a lower bound on state preparation that holds when measurements and arbitrary number of ancillae are used.
	For the purpose of the lower bound we also allow the use of post-selected measurements of multiple-qubit Pauli observables.
	Every preparation of an $n$-qubit state by Clifford+T circuit with ancillae and post-selected Pauli measurements can be rewritten~\cite{BCHK2018} as the following sequence of operations: 
	1) initialization of $\Gamma$ qubits into T state $\ket{0}+e^{i\pi/4}\ket{1}$; 
	2) post-selected measurement of $\Gamma - n$ commuting Pauli observables;
	3) application of a Clifford unitary on $\Gamma$ qubits.
	After the three steps first $\Gamma - n$ qubits are in a zero state and last $n$ qubits are in the state being prepared.
	Let us count the number of distinct states that can be prepared by the steps described above. 
	For the step two, there are at most $2\cdot4^{\Gamma}$ ways of choosing first Pauli observable and
	at most $4^{\Gamma}$ ways of choosing each of the remaining $\Gamma-n-1$ observables because
	each of them needs to commute with the first observable.
	Therefore, on step two we have at most $2\cdot4^{\Gamma(\Gamma-n)}$ choices of Pauli observables.
	For the step three, two distinct Clifford unitaries can lead to preparing the same state and
	counting total number of Clifford unitaries on $\Gamma$ qubits leads to an overestimate.
	The prepared $\ket{\psi}$ state is completely described by $4^n$ numbers $\alpha_P = \operatorname{Tr}(P\ket{\psi}\bra{\psi})$ where $P$ goes over all $n$-qubit Pauli matrices $\{I,X,Y,Z\}^{\otimes n}$.
	Let us count how many distinct $4^n$ dimensional vectors of $\alpha_P$ we can get on the step three by applying different Clifford unitaries $C$.
	Let $\rho'$ be a density matrix describing a state of all qubits after step two, then 
	$\alpha_P = \operatorname{Tr}(C\rho'C^\dagger (I_{\Gamma-n} \otimes P))$  
	which is equal to $\operatorname{Tr}(\rho' C^\dagger (I_{\Gamma-n} \otimes P) C)$. 
	We see that the vector of $\alpha_P$ is uniquely defined by action of $C$ on $2 n$ Pauli operators $Z_{\Gamma-n+k}$, $X_{\Gamma-n+k}$ for $k$ from $1$ to $n$.
	There are at most $2\cdot 4^{2\Gamma n}$ ways of choosing the action of $C$ on the listed Pauli operators. Therefore we can prepare at most $4\cdot4^{\Gamma(\Gamma+n)}$ distinct states. 
	This leads to the lower bound on the number of required \textsc{T} gates
	$\Omega\left(\sqrt{N \log N \log(1/\epsilon)}\right)$.
	
	\section{Conclusion}
	We have shown that arbitrary quantum states with $N$ coefficients, or unitaries with $KN$ values specified by classical data may be synthesized with a $\textsc{T}$ gate complexity that is an optimal $\tilde{\mathcal{O}}(\sqrt{N})$ reduction over prior art. As these subroutines are ubiquitous in many quantum algorithms we expect this result to be widely applicable. 
 
 We also expect our approach to be practical due to its almost exclusive usage of dirty qubits, which are typically abundant in larger quantum algorithms. Though our results are asymptotically optimal, constant factor and logarithmic improvements in costs could still be possible through careful optimization~\cite{Matteo2020QRAMresources,Kliuchnikov2022Spacetime}. For instance, our approach can be modified to use only $\mathcal{O}(1)$ additional clean qubits, but this increases the $\textsc{T}$ count by a logarithmic factor. As more limited trade-offs between $\textsc{T}$ gates and ancillary qubits are observed in other quantum circuits, such as for addition~\cite{Gidney2018Addition} or $\textsc{And}$~\cite{Barenco1995gates}, a major open question highly relevant to implementation in nearer-term quantum computers, is whether such a property could be generic for many other quantum circuits and algorithms.

	\subsection{Acknowledgements}
 We thank Nicolas Delfosse, Jeongwan Haah, Robin Kothari, Jessica Lemieux, Martin Roetteler, and Matthias Troyer for insightful discussions.

\section{Developments after 2018 release of preprint}
There have been numerous improvements in trade--offs~\cite{Sun2023State} between qubits, depth, and overall spacetime volume~\cite{Gui2024spacetime} of quantum circuits for state preparation and unitary synthesis~\cite{Sun2023State}, especially in connection to the block-encoding framework~\cite{Low2016Qubitization,Chakraborty2018BlockEncoding} for matrices represented by classical data~\cite{Clader2022Blockencode}. 
A notable new direction is error-resilient table-lookup~\cite{Haan2021QRAM}, which achieves logarithmic error scaling by using a large number of $\mathcal{O}(N)$ qubits. 
However, minimizing use of expensive non-Clifford gates is rarely a priority in these results, and almost all exhibit the same scaling for number of Clifford and non-Clifford gates.
We have updated~\cref{tab:synthesis_cost_comparison} with a comparison to recent work~\cite{Sun2023State} that claims an optimal qubit--depth trade--off.
To facilitate comparison, we rescale $\lambda$ and we use the tighter bound~\cref{eq:state_T_gates_1} on the \textsc{T} gate cost that was originally~\cref{eq:state_T_gates_2}.

Our \textsc{SelectSwap} architecture of table lookup remains state-of-art.
To date, all methods with an asymptotic advantage in non-Clifford gate cost for state preparation or unitary synthesis~\cite{Clader2022Blockencode} do so by reduction to \textsc{SelectSwap}.
The circuit implementation of \textsc{SelectSwap} has also seen some optimizations.
By allowing intermediate circuit measurements, the multi-target $\textsc{Cnot}_n$ gates in~\cref{sec:cnot_log_depth} can be implemented in constant, rather than logartihimic depth, though at the cost of using an additional $\mathcal{O}(n)$ clean qubits. 
Intermediate measurements also enable uncomputing the garbage register~\cref{eq:standard_oracle} of \textsc{SelectSwap} using $4\lceil\frac{N}{\lambda}\rceil+4\lambda$ $\textsc{T}$ gates and $\lambda+\lceil\log{(N/\lambda)}\rceil$ clean qubits, which is independent of $b$~\cite{Berry2019CholeskyQubitization}.  

The $\textsc{T}$ count reductions enabled by \textsc{SelectSwap} have been key to numerous state-of-art resource estimates for quantum computing applications, such as in chemistry~\cite{Berry2019CholeskyQubitization, vonBurg2020carbon, Lee2020hypercontraction, Bauer2020ChemistryReview}, and other classical-data-intense routines~\cite{Sanders2020Combinatorial,Clader2022Blockencode}.

\bibliographystyle{quantum}
	\bibliography{LowTStatePrepQuantum}

	\appendix
	\section{Purified density matrix preparation}
	\label{sec:purified_density_matrix}
	In some applications, particular in quantum simulation based on a linear combination of unitaries or qubitization~\cite{Berry2015Truncated,Low2016Qubitization}, it suffices to prepare the density matrix $\rho=\sum_{x=0}^{N-1}\frac{|a_x|}{\|\vec{a}\|_1}\ket{x}\bra{x}$ through a quantum state $\ket{\psi}=\sum_{x=0}^{N-1} \sqrt{\frac{a_x}{\|\vec{a}\|_1}} \ket{x}\ket{\mathrm{garbage}_x}$ of~\cref{eq:state_prep_garbage} where the number state $\ket{x}$ is entangled with some garbage that depends only on $x$. By allowing garbage, it was shown by~\cite{Babbush2018encoding} that strictly linear $\textsc{T}$ gate complexity in $N$ is achievable, using a $\textsc{Select}$ data-lookup oracle corresponding to the $\lambda=1$ case of~\cref{tab:cost_comparison}. We outline the original idea, then generalize the procedure using the $\textsc{SelectSwap}$ network, which enables sublinear $\textsc{T}$ gate complexity and better error scaling than the garbage-free approach. As density matrices have positive diagonals, we only consider the case of positive $a_x\ge0$.
	
	The original approach is based on a simple observation. By comparing a $b$-bit number state $\ket{a}$ together with a uniform superposition state $\ket{u_{2^b}}=\frac{1}{\sqrt{2^b}}\sum_{j=0}^{2^b-1}\ket{j}$ over $2^b$ elements, $\ket{a}$ may be mapped to
	\begin{align}
	\label{eq:comparator}
	\ket{a}\mapsto \ket{a}\Big(
	\sqrt{\frac{a}{2^b}}\ket{0}\ket{u_a}
	+\sqrt{\frac{2^b-a}{2^b}}\ket{1}\ket{u_{\ge a}}
	\Big),
	\end{align}
	where we denote a uniform superposition after the first $a$ elements by $\ket{u_{\ge a}}=\sum_{j=a}^{2^b-1}\frac{\ket{j}}{\sqrt{2^b-a}}$. This may be implemented using quantum addition~\cite{Cuccaro2004Adder}, which costs $\mathcal{O}(b)$ Clifford+$\textsc{T}$ gates with depth $\mathcal{O}(b)$.
	
	This observation is converted to state-preparation in four steps. First, the normalized coefficients $a_x\frac{N2^b}{\|a\|_1}\approx a_x'$ are rounded to nearest integer values such that $\|\vec{a}'\|_1=N2^b$. Second, the data-lookup oracle that writes two numbers $a_x''\in[2^b]$ and $f(x)\in[N]$ such that $a'_x=a''_x+\sum_{\mathrm{y}\in\{f^{-1}(x)\}}(2^b-a''_y)$. Thus
	\begin{align}
	\label{eq:QROM_data}
	O\ket{x}\ket{0}\ket{0}=\ket{x}\ket{a''_x}\ket{f(x)},
	\end{align}
	where we have omitted the irrelevant garbage state. Third, the oracle $O$ is applied to a uniform superposition over $\ket{x}$, and the comparator trick of~\cref{eq:comparator} is applied. This produces the state
	\begin{align}
	\sum_{x=0}^{N-1}\frac{\ket{x}}{\sqrt{N}}\ket{a''_x}\ket{f(x)}\left(\sqrt{\frac{a''_x}{2^b}}\ket{0}\ket{u_{a''_x}}
	+\cdots\ket{1}\ket{u_{\ge {a''_x}}}\right).
	\end{align}
	Finally, $\ket{f(x)}$ is swapped with $\ket{x}$, controlled on the $\ket{1}$ state. This leads to a state $\ket{\psi}=\sum_{x=0}^{N-1} \sqrt{\frac{a'_x}{\|\vec{a}'\|_1}} \ket{x}\ket{\mathrm{garbage}_x}$. After tracing out the garbage register, the resulting density matrix $\rho'$ approximates the desired state $\rho$ with trace distance
	\begin{align}
	\|\rho'-\rho\|_1=\mathcal{O}(2^{-b})\le \epsilon.
	\end{align}
	
	The $\textsc{T}$ gate complexity is then the cost of the data-lookup oracle of~\cref{eq:QROM_data} plus $\mathcal{O}(b)$ for the comparator of~\cref{eq:comparator}, plus $\mathcal{O}(\log{N})$ for the controlled swap with $\ket{f(x)}$. By implementing this data-lookup oracle with the $\textsc{SelectSwap}$ network, one immediately obtains the stated $\textsc{T}$ gate complexity of $\mathcal{O}(\lambda (b+\log{N})+\frac{N}{\lambda})=\mathcal{O}(\lambda\log{(N/\epsilon)}+\frac{N}{\lambda})$, where we choose $b=\mathcal{O}(\log{(1/\epsilon)})$. %This has a minimum value of $\mathcal{O}(\sqrt{N\log{(N/\epsilon)}})$ at $\lambda=\mathcal{O}(\sqrt{N/b})$.

	\section{Data-lookup oracle implementation details}
	\label{sec:data-lookup-details}
	In this section, we present additional details on the implementation of the data-lookup oracle. In particular, we discuss a multi-target $\textsc{Cnot}$ implementation in logarithmic depth without ancillary qubits in~\cref{sec:cnot_log_depth}, and a swap network $\textsc{Swap}$ with similar properties in~\cref{sec:swap}. Also evaluated is the $\textsc{T}$ count and Clifford depth of these implementations up to constant factors. We define the Clifford depth, to be the number of layers of two-qubit Clifford gates that cannot be executed in parallel, assuming all-to-all qubit connectivity. We also assume that each $\textsc{T}$ magic-state injection circuit has a Clifford depth of $1$.
	
	\subsection{Quantum fanout in logarithmic depth without ancillary qubits}
	\label{sec:cnot_log_depth}
	In this section, we construct a controlled-$\textsc{NOT}$ gate that targets $n$ qubits, that is,
	\begin{align}
	\label{eq:CNOT_n}
	\textsc{Cnot}_n=\ket{0}\bra{0}\otimes \mathbb{I}+\ket{1}\bra{1}\otimes X^{\otimes n}.
	\end{align}
	The most straightforward approach applies $n$ $\textsc{NOT}$ gates in sequence, each controlled by the same qubit. A slight modification results in logarithmic depth as shown in~\cref{tab:cost_comparison_cnot}.
	
	Given any number of $n$ qubits in states $\ket{x_{j}}$ for $j=0,1,\cdots,n-1$, one may use a ladder of $n-1$ controlled-$\textsc{NOT}$ gates to realize the transformation
	\begin{align}\nonumber
	&\ket{x_0}\ket{x_1}\ket{x_2}\cdots\ket{x_{n-1}}\\\nonumber
	\rightarrow&\ket{x_0}\ket{x_0\oplus x_1}\ket{x_2}\cdots\ket{x_{n-1}}\\\nonumber
	\rightarrow&\ket{x_0}\ket{x_0\oplus x_1}\ket{x_0\oplus x_1\oplus x_2}\cdots\ket{x_{n-1}}\\
	\rightarrow&\ket{x_0}\ket{x_0\oplus x_1}\ket{x_0\oplus x_1\oplus x_2}\cdots\ket{\oplus_{j=0}^{{n-1}}x_j}.
	\end{align}
	Let us call this unitary operation $\textsc{Ladder}_n$. 
	
	We now introduce a control qubit $\ket{z}$. One implementation of $\textsc{Cnot}_n$ is then obtained by applying $\textsc{Ladder}^\dag_n$, followed by a $\textsc{NOT}$ on $\ket{x_0}$ controlled by $\ket{z}$, and finally followed by $\textsc{Ladder}_n$. This has a Clifford depth of $2n-1$ as depicted below for the example $n=4$.
	\begin{center}
		\includegraphics[scale=1.0,trim={1.5cm 0 2cm 0}, clip]{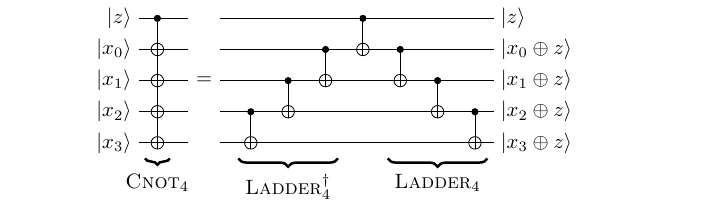}
	\end{center}
	
	By distributing the controls and targets above in a tree-structure as depicted below for the example $n=4$, the Clifford depth of $\textsc{Cnot}_n$ may be reduced to $2\lceil \log_2{(n)}\rceil+1$.
	\begin{center}
		\includegraphics[scale=1.0,trim={0 0 7cm 0}, clip]{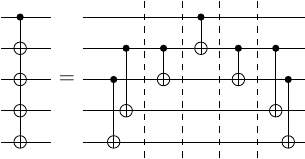}
	\end{center}
	As the control qubit $\ket{z}$ is only used once in the above circuit, a further reduction in depth is possible by repeatedly using it to apply additional multi-target $\text{Cnot}$ gates in each time-slice. Let us denote by $g(d)=2^{(d-1)/2}$ the maximum number of qubits targeted by the above circuit in a depth of $d$. Then the total number of qubits $n(d)$ targeted with this additional reduction satisfies the recurrence
	\begin{align}
	\label{eq:CNOT_best_depth}
	n(1)&=g(1),\\
	n(2)&=n(1)+g(1),\\
	\vdots&\\\nonumber
	n(d)&=n(d-1)+g(2\lceil d/2\rceil-1)
	=\begin{cases}
	(3\cdot 2^{(d-1)/2}-2),& d\;\text{odd},\\
	2(2^{d/2}-1),& d\;\text{even}.
	\end{cases}\ge 2(2^{d/2}-1)
	\end{align}
	Let us denote by $\text{D}[\textsc{Cnot}_n]$ the depth of this implementation of $\textsc{Cnot}_n$, which satisfies
	\begin{align}
	\label{eq:depth_cnot_n}
	\text{D}(\textsc{Cnot}_n)\le \left\lceil 2\log_2 \left(\frac{n+2}{2}\right)\right\rceil.
	\end{align}
	{
		\renewcommand{\arraystretch}{1.1}
		\begin{table}
  \centering
			\begin{tabular}{c|c|c|c}
				Approach &Clifford Depth $(d)$&Clifford Count&\shortstack{Volume \\ $=(n+1)d$}\\
				\hline
				{Linear}&$n$&$n$&$(n+1)d$
				\\
				{Logarithmic}&$2\lceil \log_2{(n)}\rceil+1$&$2n-1$&$(n+1)d$
				\\
				\cref{eq:CNOT_best_depth}&$\le 2\lceil \log_2{(n/2+1)}\rceil$&$\le 2n-1$&$(n+1)d$
			\end{tabular}
			\caption{\label{tab:cost_comparison_cnot}Different implementations of a controlled-$\textsc{NOT}$ gate~\cref{eq:CNOT_n} that targets $n$ qubits.}
		\end{table}
	}
	
	\subsection{Implementations of a \textsc{Swap} network}
	\label{sec:swap}
	In this section, we detail various implementations of the unitary swap network $\textsc{Swap}$ that moves an $b$-qubit quantum register indexed by $x\in\{0,\cdots,N-1\}$ to the position of the $x=0$ register, controlled by an index state $\ket{x}$. More precisely, for any set of quantum states $\bigotimes_{x=0}^{N-1}\ket{\phi_x}_x$ in the $b$-qubit register indexed by $x$,
	\begin{align}
	\textsc{Swap}\left[\ket{x}\bigotimes_{x=0}^{N-1}\ket{\phi_x}_x\right]=\ket{x}\ket{\phi_x}_0 \cdots,
	\end{align}
	where the final quantum states in registers indexed by $x>0$ are unimportant. Let us express the index ${x}\equiv x_0 x_1\cdots$ in binary, where $x_0$ is the smallest bit. Then it suffices to perform swaps between all pairs of registers indexed by $\{(i, i+2^j)\;|\;i\in 2^{j+1} \mathbb{N}_0  \}$, controlled by the $j^{\textrm{th}}$ qubit $\ket{x_j}$ of the index state $\ket{x}$, in the order of $j=0,1,\cdots,\lceil\log_2{N}\rceil-1$. 
	
	Each controlled pair-wise swap may be understood as a circuit $\textsc{CSwap}_n$ that swaps two $n$-qubit quantum registers in any state $\ket{\psi},\ket{\phi}\in\mathbb{C}^{2^n}$, controlled by a single qubit $\ket{z}\in\mathbb{C}^{2}$. That is,
	\begin{align}
	\label{eq:swap_controlled}
	\textsc{CSwap}_n\ket{z}\ket{\psi}\ket{\phi}=\begin{cases}
	\ket{z}\ket{\psi}\ket{\phi}, & z=0,\\
	\ket{z}\ket{\phi}\ket{\psi}, & z=1.
	\end{cases}
	\end{align}
	The overall cost of $\textsc{Swap}$ is then the sum of costs of $\textsc{CSwap}_n$, over all $j$, where $n=\lfloor\frac{N-1}{2^{j+1}}+\frac{1}{2}\rfloor$.
	
	We now consider different implementations considered realize trade-offs between Toffoli-gate count, circuit depth, and ancillary qubit usage, as summarized in~\cref{tab:cost_comparison_swap}.
	{
		\renewcommand{\arraystretch}{1.1}
  \centering
   \setlength\tabcolsep{1.5pt}
		\begin{table*}
			\begin{tabular}{c|c|c|c|l}
				Approach&Clifford Depth $(d_c)$&T count &T depth $(d_t)$&Volume $=(2n+1)(d_c+d_t)$\\
				\hline
				{Linear}&$4n+4$&$7n$&$4$&$8n^2 + \mathcal{O}(n)$
				\\
				{Logarithmic}&$4\text{D}[\textsc{Cnot}_{\lceil n/2\rceil}]+2$&$14n$&$8$&$\le 8n(\left\lceil 2\log_2 \left(\frac{n+2}{2}\right)\right\rceil+2.5)+\mathcal{O}(\log{n})$
				\\
				{Phase incorrect}&$\text{D}[\textsc{Cnot}_{n}]+4$&$4n$&$4$&$\le 2n(\left\lceil 2\log_2 \left(\frac{n+2}{2}\right)\right\rceil+8)+\mathcal{O}(\log{n})$
			\end{tabular}
			\caption{\label{tab:cost_comparison_swap}Different implementations of a controlled-swap between two $n$-qubit registers. The depth $\text{D}[\textsc{Cnot}_n]\le \left\lceil 2\log_2 \left(\frac{n+2}{2}\right)\right\rceil$ is from~\cref{eq:depth_cnot_n}.}
		\end{table*}
	}
	\subsubsection{\texorpdfstring{$\textsc{CSwap}_n$}{Swap} in linear depth without ancillary qubits}
	\label{sec:swap_linear_depth}
	It is simple to construct $\textsc{CSwap}_n$ with depth $\mathcal{O}(n)$ without any ancillary qubits. As the circuit that swaps two qubits is constructed from three $\textsc{Cnot}$ gates as follows,	
	\begin{center}
		\includegraphics[scale=1.0,trim={0 0 7cm 0}, clip]{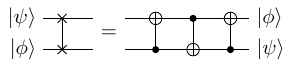}
	\end{center}
	a controlled-swap below is obtained by replacing the middle $\textsc{Cnot}$ with a Toffoli gate.
	\begin{center}
		\includegraphics[scale=1.0,trim={0 0 7cm 0}, clip]{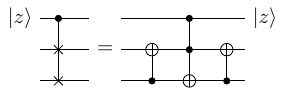}
	\end{center}
	A circuit that implements~\cref{eq:swap_controlled}, a swap between $n$ pairs of qubits, is then the above repeated $n$ times in sequence, each controlled by the same qubit as follows.
	\begin{center}
		\includegraphics[scale=1.0,trim={0 0 9cm 0}, clip]{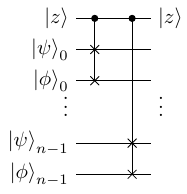}
	\end{center}
	
	\subsubsection{\texorpdfstring{$\textsc{CSwap}_n$}{Swap} in logarithmic depth without ancillary qubits}
	\label{sec:swap_log_depth}
	Constructing $\textsc{CSwap}_n$ with depth $\mathcal{O}(\log{n})$ without any ancillary qubits requires a little more thought. Let us consider a more general problem. Suppose we have an arbitrary unitary operator $V$ that is self-inverse, meaning $V^2=\mathbb{I}$ -- one may verify that the two-qubit swap satisfies this property. Our goal is to implement a multi-target controlled-$V$ gate on $n$ registers
	\begin{align}
	\label{eq:multi-target-V}
	\ket{0}\bra{0}\otimes \mathbb{I} + \ket{1}\bra{1}\otimes V^{\otimes n}. 
	\end{align}
	To begin, consider the following circuit identity, which is motivated by the `toggling` trick in~\cite{Haner2016factoring}. 
	\begin{center}
		\includegraphics[scale=1.0,trim={0 0 5cm 0}, clip]{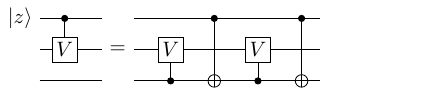}
	\end{center}
	Observe the the bottom qubit may be dirty -- its state does not affect the computation, and remains unchanged at the end of it. 
	
	Thus a multi-target controlled-$V$ on $n$ registers may be constructed by applying $n$ singly-controlled $V$ gates in parallel before and after a single multiply-controlled $\textsc{not}$ gate, using a total of $n$ extra dirty qubits as follows.
	\begin{center}
		\includegraphics[scale=1.0,trim={0 0 5cm 0}, clip]{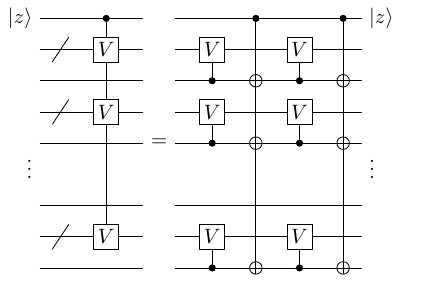}
	\end{center}
	As the additional qubits may be dirty, this is easily modified to use no ancillary qubits at all. Let us apply the multi-target $V$ on $\lceil n/2\rceil$ registers, using any $\lfloor n/2\rfloor$ qubits from the other registers as dirty qubits. When $n$ is odd, the topmost $V$ may be controlled directly by the $\ket{z}$ qubit. We then apply the same circuit on the remaining registers by using qubits in the initial targets as control qubits. In total, this uses at most $2n$ controlled-$V$ gates and two multiply-controlled $\textsc{not}$ gates on $\lceil n/2\rceil$ qubits, each with cost given by~\cref{tab:cost_comparison_cnot}.
	
	\subsubsection{\texorpdfstring{$\textsc{T}$}{T}-gate decomposition}
	Each controlled-swap may be decomposed into Clifford+$\textsc{T}$ gates using standard techniques. For instance, the standard synthesis of each Toffoli uses $7$ $T$-gates~\cite{shende2006synthesis}, as seen below. 
	\begin{center}
		\includegraphics[scale=1.0,trim={0 0 5.5cm 0}, clip]{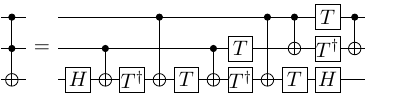}
	\end{center}
	Thus one might expect that $\textsc{CSwap}_n$ in logarithmic depth requires $14N$ $T$-gates. However, simple cancellations using the above decomposition reduces this to $10N$ $T$-gates. 
	
	A further reduction to just $4N$ $T$-gates is possible if we allow the output state to be correct up to a phase factor. The decomposition by~\cite{Barenco1995gates} below, using the gate $G=S^\dag\cdot H\cdot T\cdot H\cdot S$, approximates the Toffoli gate up to a minus sign on one of the matrix elements.
	\begin{center}
		\includegraphics[scale=1.0,trim={0 0 7cm 0}, clip]{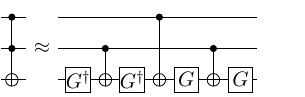}
	\end{center}
	Thus a controlled-swap is obtained by a simple modification as follows.
	\begin{center}
		\includegraphics[scale=1.0,trim={0 0 7cm 0}, clip]{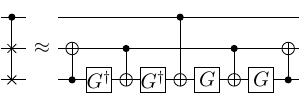}
	\end{center}
	Using this approximate controlled-swap, a version of $\textsc{CSwap}_n$ that is correct up to a $\pm 1$ phase may be obtained by replacing the middle $\textsc{Cnot}$ with $\textsc{Cnot}_n$, which may be implemented with logarithmic depth. As this incorrect phase may be absorbed into the garbage state of the data-lookup oracle, we may apply this phase-incorrect swap operation without loss of generality.
	
	\section{Indicator Function Construction}
	\label{sec:indicator_function_construction}
	
	There is an alternative construction based on implementing an \emph{indicator function}. That is, the function
	\begin{align}
	e^{(n)} \colon  \{ 0, 1 \}^{n} \to \{ 0, 1 \}^{N},
 \end{align}
	where $N = 2^{n}$, which maps a string $x \in \{ 0, 1 \}^{n}$ to an exponentially long encoding which is $1$ at position $x$ and zeros everywhere else.
	
	We observe that $e^{(n)}$ can be implemented with the following parameters.
	\begin{theorem}
		\label{thm:indicator}
		For $n \geq 1$, let $U_n$ be the unitary for the $n$th indicator function, $e^{(n)}$. That is, the circuit mapping  
		\begin{align}
		\ket{x} \ket{y} \mapsto \ket{x} \ket{e^{(n)}(x) \oplus y}
		\end{align}
		for all $x \in \{ 0, 1 \}^{n}$ and $y \in \{ 0, 1 \}^{N}$. There is a circuit computing $U_n$ in depth $O(\log^2 N)$ with $O(N)$ \textsc{T} gates without \emph{any} ancillary qubits. 
	\end{theorem}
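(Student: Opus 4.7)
The plan is to prove \thm{indicator} by induction on $n$, exploiting the recursive structure of $e^{(n)}$. Writing $x = x_1 x_2 \cdots x_n$ and setting $x' = x_1\cdots x_{n-1}$, the bit-string $e^{(n)}(x)$ of length $N = 2^n$ equals $(e^{(n-1)}(x'), 0^{N/2})$ when $x_n = 0$ and $(0^{N/2}, e^{(n-1)}(x'))$ when $x_n = 1$. Thus the active half of the target register depends only on $x_n$, and the pattern within that half is precisely $e^{(n-1)}(x')$. This suggests implementing $U_n$ by first routing the correct half of $y$ into a canonical position, applying $U_{n-1}$ there, and then routing back.

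Concretely, split $y$ into halves $y_L$ and $y_R$ of $N/2$ qubits each. The recursive construction is: (i) swap $y_L \leftrightarrow y_R$ controlled on $x_n$; (ii) apply $U_{n-1}$ to the pair $(x', y_L)$; (iii) undo the controlled swap. After step~(i) the ``active'' half is always $y_L$, so step~(ii) XORs $e^{(n-1)}(x')$ into $y_L$, and step~(iii) moves it back to $y_R$ precisely when $x_n=1$. The controlled swap of two $N/2$-qubit registers is realised by the logarithmic-depth, ancilla-free construction in \sec{swap_log_depth}, which uses the very qubits being swapped as the dirty qubits required by the internal multi-target $\textsc{Cnot}_{N/2}$. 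The base case $n=1$ is $U_1\ket{x_1}\ket{y_0 y_1} = \ket{x_1}\ket{y_0\oplus \bar{x_1}, y_1\oplus x_1}$, implemented with two $\textsc{Cnot}$s and one $X$ in constant depth with zero $\textsc{T}$ gates.

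The cost analysis reduces to two recurrences. From \tab{cost_comparison_swap}, each controlled swap of two $N/2$-qubit registers costs $O(N)$ $\textsc{T}$ gates and $O(\log N)$ depth, so $T(n) \le T(n-1) + O(N)$ and $D(n) \le D(n-1) + O(n)$. The first telescopes to $T(n) = O(2^n) = O(N)$ since the sum $2^n + 2^{n-1} + \cdots$ is dominated by its first term; the second gives $D(n) = O(n^2) = O(\log^2 N)$. No ancillary qubits are introduced at any stage. The main obstacle, and the step that deserves the most care, is verifying that the ancilla-free logarithmic-depth swap continues to function when its dirty qubits are drawn from the registers being swapped: one must check that the internal multi-target $\textsc{Cnot}$ returns those dirty qubits to their initial state before the outer swap gadget completes, so that their subsequent role as swapped payload is unaffected. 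A secondary bookkeeping step is to verify that the conjugation $\textsc{Swap}\cdot (I\otimes U_{n-1}) \cdot \textsc{Swap}$ really implements a controlled application of $U_{n-1}$ to the half of $y$ selected by $x_n$; this follows because $U_{n-1}$ acts only on $y_L$, so the outer swap cleanly routes the chosen half into and out of that slot.
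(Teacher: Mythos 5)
Your proof is correct, but it proceeds by a genuinely different recursion than the paper's. The paper splits the \emph{input} $x$ into two halves and exploits the product structure of the indicator (each bit of $e^{(n)}(x)$ is the AND of one bit of $e^{(n/2)}(x_{hi})$ and one bit of $e^{(n/2)}(x_{lo})$): it computes both half-indicators into $O(\sqrt{N})$ dirty qubits borrowed from the idle half of the $y$ register, then combines them with an array of $N$ Toffoli gates, giving the depth recurrence $D(n)=4D(n/2)+O(n)=O(n^2)$ and an $O(N)$ \textsc{T} count dominated by the Toffoli layer. You instead peel off one input bit at a time, writing $U_n = \textsc{Cswap}\cdot(U_{n-1}\otimes \mathbb{I})\cdot\textsc{Cswap}$ where the controlled swap exchanges the two halves of $y$ conditioned on $x_n$; the recurrences $T(n)=T(n-1)+O(2^n)$ and $D(n)=D(n-1)+O(n)$ telescope to the same $O(N)$ and $O(\log^2 N)$ bounds with no ancillary qubits. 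Your route is arguably cleaner for this particular theorem: all of the dirty-qubit bookkeeping is delegated to the ancilla-free logarithmic-depth controlled swap of \cref{sec:swap_log_depth}, which already draws its toggling qubits from the registers being swapped and provably restores them, so the concern you flag is discharged by that gadget rather than needing a new argument. What the paper's two-half decomposition buys is that it generalizes directly to the theorem that follows it: for an arbitrary $f\colon\{0,1\}^n\to\{0,1\}^b$ one needs the half-indicators available simultaneously as explicit bit-vectors to form the matrix--vector product that yields the $O(\sqrt{bN})$ \textsc{T} count, whereas your one-bit-at-a-time recursion relies on the indicator having exactly one nonzero output bit and does not extend to that setting.
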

	\begin{proof}
		When $n = 1, 2$ there are trivial circuits without ancillary qubits. For $n > 2$, let us suppose we have arbitrarily many clean ancillary qubits. We divide the input bits $x$ into two halves, $x_{hi}$ and $x_{lo}$, then recursively compute $e^{(n/2)}(x_{hi})$ and $e^{(n/2)}(x_{lo})$ in $O(2^{n/2}) = O(\sqrt{N})$ clean ancillary qubits each.
		
		Each output of $e^{(n)}(x)$ is the AND of a bit in $e^{(n/2)}(x_{lo})$ and a bit in $e^{(n/2)}(x_{hi})$. Each bit is used $O(\sqrt{N})$ times, so we can finish XORing $e^{(n)}(x)$ into $y$ with an array of Toffoli gates of $O(\sqrt{N})$. Alternatively, we can make $O(\sqrt{N})$ copies of each vector and do all the Toffoli gates simultaneously.
		
		Now suppose the ancillary qubits are dirty. Dirty qubits cannot store a qubit in the same way as clean qubits; since the initial state of the qubit is arbitrary, the information is encoded in the \emph{change} in state rather than the actual state. To implement a \textsc{Cnot} controlled on a dirty qubit, for instance, we apply the \textsc{Cnot} first, flip or not flip the qubit, then apply another \textsc{Cnot}. If there was no change, the two gates cancel, otherwise exactly one fires.
		
		Recall that the Toffoli gate has an implementation up to a faulty sign (which we can tolerate) as three \textsc{Cnot} interleaved with $G$ or $G^{\dag}$. As described above, we can implement each \textsc{Cnot} with a pair of \textsc{Cnot}s and, more importantly, a recursive call to the subroutine which populates that qubit, i.e., an indicator function. Actually, we use the previously described circuit for  $\textsc{Cnot}_{O(\sqrt{N})}$ instead of many individual \textsc{Cnot} gates for each Toffoli involving that bit.
		
		The depth cost for $\textsc{Cnot}_{O(\sqrt{N})}$ is $O(\log(\sqrt{N})) = O(n)$. Additionally, to compute $U_n$ we need 4 recursive calls to $U_{n/2}$. The depth satisfies the recurrence $D(n) = 4D(n/2) + O(n)$, so $D(n) = O(n^2) = O(\log^2 N)$.
		
		Finally, suppose we apply the array of Toffoli gates in \emph{two} layers instead of just one simultaneous layer. Since we are only applying Toffolis to half of the $\ket{y}$ register at any time, the other half can be used as dirty ancillary qubits. Asymptotically, this is much more than the $O(\sqrt{N})$ dirty qubits we need to compute $U_{n/2}$. Hence, we can implement $U_n$ without any additional qubits.
	\end{proof}
	
	It is wasteful to compute all of $e^{(n)}(x)$ in ancillary qubits, but it can be used to compute an arbitrary function. Better is to divide the input into two pieces,  similar to how we compute the indicator function.
	\begin{theorem}
		Suppose $f \colon \{ 0, 1 \}^{n} \to \{ 0, 1 \}^{b}$ is an arbitrary function, and let $U$ be the unitary mapping
		\begin{align}
		\ket{x} \ket{y} \mapsto \ket{x} \ket{y \oplus f(x)},
		\end{align}
		for all $x \in \{ 0, 1 \}^{n}$ and $y \in \{ 0, 1 \}^{b}$. 
		
		There is a circuit for $U$ of depth $O(\log^2 N + \log b)$, using $O(bN)$ dirty ancillary qubits and $O(\sqrt{N})$ $T$ gates. Alternatively, there is a circuit for $U$ using $O(\sqrt{bN})$ \textsc{T} gates and only $O(\lambda \sqrt{bN})$ dirty ancillary qubits, but with depth $O(\sqrt{bN} / \lambda + \log^2 N)$, for any $1 \leq \lambda \leq \sqrt{bN}$. 
	\end{theorem}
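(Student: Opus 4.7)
The plan is to plug the indicator-function primitive of \thm{indicator} into a bilinear decomposition of $f$. Split the input as $x = (x_1, x_2) \in \{0,1\}^{n_1} \times \{0,1\}^{n_2}$ with $n_1 + n_2 = n$, and use \thm{indicator} to compute the one-hot encodings $u := e^{(n_1)}(x_1)$ and $v := e^{(n_2)}(x_2)$ into two fresh dirty-ancilla registers of sizes $2^{n_1}$ and $2^{n_2}$. This step costs $O(2^{n_1} + 2^{n_2})$ \textsc{T} gates at depth $O(\log^2 N)$. Because $u_a v_c = 1$ iff $x = (a,c)$, each output bit decomposes as $f(x)_j = \bigoplus_{a,c} f(a,c)_j \, u_a v_c$ for $j \in [b]$.

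To exploit this decomposition I would introduce an intermediate register $z$ indexed by $(j,a) \in [b] \times \{0,1\}^{n_1}$ and set $z_{j,a} := \bigoplus_c f(a,c)_j \cdot v_c$ by a layer of \emph{Clifford-only} \textsc{Cnot}s sourced from the $v$ register --- one \textsc{Cnot} per classical 1-entry of the fixed matrix $M_j[a,\cdot] := f(a,\cdot)_j$. Using the multi-target fanout of \cref{sec:cnot_log_depth}, this runs in depth $O(\log N)$ and uses no \textsc{T} gates. A layer of phase-incorrect Toffolis with controls $(u_a, z_{j,a})$ and target $y_j$ then flips each $y_j$ by the single nonvanishing term, namely $z_{j, x_1} = f(x)_j$. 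The distinct $(j,a)$ Toffolis share only the one-hot controls $u_a$ (resolved by an additional fanout), so this layer parallelises to depth $O(\log (bN))$ using $O(b\, 2^{n_1})$ Toffolis and hence $O(b\, 2^{n_1})$ \textsc{T} gates. Uncomputing $z, u, v$ in reverse restores all ancillae and at most doubles the counts.

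Balancing $b\, 2^{n_1} = 2^{n_2}$ subject to $n_1 + n_2 = n$ gives $2^{n_1} = \Theta(\sqrt{N/b})$ and $2^{n_2} = \Theta(\sqrt{bN})$, yielding $O(\sqrt{bN})$ \textsc{T} gates, $O(\sqrt{bN})$ ancillae (well within the $O(bN)$ budget), and depth $O(\log^2 N + \log b)$. For the space/time tradeoff with budget $O(\lambda\sqrt{bN})$, I would serialise the $z$-and-Toffoli stages into $O(\sqrt{bN}/\lambda)$ batches operating on disjoint slices of the $(j,a)$-index set, each of which fits in $O(\lambda \sqrt{bN})$ ancillae: the $v$ register is computed once and reused across batches, while the much smaller $u$ register and the per-batch slice of $z$ are the only objects whose footprint scales with $\lambda$. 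This trades space for depth while leaving the total \textsc{T} count unchanged.

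The main obstacle is careful depth bookkeeping: verifying that the fanout of the shared controls $u_a$ across up to $b$ parallel Toffolis, and the pattern of \textsc{Cnot}s feeding $z$ from $v$ (a bipartite graph of classical 1-entries of the $M_j$'s), both admit the claimed logarithmic-depth implementation without exceeding the ancilla budget. One must also check that the phase-incorrect Toffoli of \cref{sec:swap} is safe here --- that the residual signs cancel between the forward and reverse passes and therefore do not leak onto the returned dirty ancillae --- and that the dirty-ancilla recursion underlying \thm{indicator} composes cleanly with these Clifford stages, so that every borrowed qubit exits in precisely its initial state.
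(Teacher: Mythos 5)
Your decomposition is essentially the paper's: split $x$ into two halves, one-hot encode each half via \thm{indicator}, compute a $\sqrt{bN}$-bit table of partial evaluations by $\mathbb{F}_2$-linear (\textsc{Cnot}-only) combinations of the larger one-hot vector, and then select the answer with $O(\sqrt{bN})$ Toffolis controlled by the smaller one-hot vector; the balancing $b\,2^{n_1}=2^{n_2}$ and the resulting $O(\sqrt{bN})$ \textsc{T} count match the paper exactly (up to which half is called ``high''). However, there is a genuine gap in your depth/ancilla accounting for the Clifford stage. You claim that $z_{j,a}=\bigoplus_c f(a,c)_j\,v_c$ can be computed in depth $O(\log N)$ using only the $O(\sqrt{bN})$ qubits of the $v$ and $z$ registers. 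For a worst-case $f$ this is false: the map $v\mapsto Mv$ with $M_{(j,a),c}=f(a,c)_j$ is an arbitrary $m\times m$ Boolean matrix with $m=\Theta(\sqrt{bN})$, and a counting argument shows a generic such linear map needs $\Omega(m^2/\log m)$ \textsc{Cnot}s; on $O(m)$ wires a depth-$d$ circuit contains at most $O(md)$ two-qubit gates, forcing $d=\Omega(m/\log m)=\tilde\Omega(\sqrt{bN})$. The multi-target fanout of \cref{sec:cnot_log_depth} parallelises one source onto many targets, but here every target $z_{j,a}$ also receives contributions from up to $2^{n_2}$ sources, and those gates all touch the same qubit.

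This is precisely why the theorem is stated as a dichotomy and why the paper's proof makes $O(b\,2^{k})$ \emph{copies} of the large one-hot vector: with $O(bN)$ dirty ancillae all $\sqrt{bN}$ parities are computed by disjoint log-depth parity trees (fan-ins conjugated by Hadamards), giving depth $O(\log^2 N+\log b)$, whereas with only $O(\lambda\sqrt{bN})$ ancillae one keeps $\lambda$ copies and serialises the parities into batches, paying depth $O(\sqrt{bN}/\lambda)$. Your proposal claims the low-ancilla footprint and the polylogarithmic depth simultaneously, which is strictly stronger than the theorem and unachievable. Your batching remark at the end gestures at the right fix but attributes the $\lambda$-dependence to the wrong registers (it is the number of copies of the large one-hot vector that must scale with $\lambda$, not the $u$ or $z$ slices). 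A secondary, repairable omission: the $2^{n_1}$ Toffolis targeting a fixed $y_j$ also cannot be applied in parallel directly; one must first AND into $O(b\,2^{n_1})$ ancillae and then XOR-tree into $y_j$, which fits the budget but needs to be said, along with the doubling incurred by the dirty-qubit toggling trick.
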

	\begin{proof}
		Divide the input into $n-k$ and $k$ bit pieces, where $k$ is to be determined later. Let $\hat{f} \colon \{ 0, 1 \}^{n-k} \to \{ 0, 1 \}^{b \times 2^k}$ be
		\begin{align}
		\hat{f}(x_{hi}) = \big( f(x_{hi} x_{lo}) \big)_{x_{lo} \in \{ 0, 1 \}^{k}},
		\end{align}
		the function which outputs the function for all possible values of the low order bits, given the high order bits. 
		
		Suppose, for the moment, that we have as many clean ancillary qubits as we want. We can na\"{i}vely compute $\hat{f}(x_{hi})$ by constructing $e^{(n-k)}(x_{hi})$, making $b \cdot 2^{k}$ copies, and computing the parity of some subset of bits of $e^{(n-k)}(x_{hi})$ for each output bit of $\hat{f}(x_{hi})$. Constructing $e^{(n-k)}(x_{hi})$ requires $O(2^{n-k})$ ancillary qubits, $O(2^{n-k})$ \textsc{T} gates and $O((n-k)^2)$ depth. The rest is done with $\textsc{Cnot}_{b \cdot 2^k}$ gates to make copies and $\textsc{Cnot}_{O(2^{n-k})}$ gates conjugated by Hadamards to compute parities. This uses many ancillary qubits--$O(2^{n-k})$ for the original vector times $O(b 2^{k})$ copies is $O(bN)$ qubits.
		
		Since the layers of \textsc{Cnot} gates compute (in the output bits) a linear function of the inputs, it is not difficult to adapt for dirty ancillary qubits. Just apply the linear function, flip the dirty ancillary qubits that are set to $1$, then apply the linear function again, appealing to the identity $T(x \oplus y) \oplus T(y) = T(x)$ for a linear function. Thus, whatever state $\ket{y}$ was in the dirty qubits, we still manage to compute $T(x)$. 
		
		We have shown how to compute $\hat{f}(x_{hi})$ and XOR it into dirty ancillary qubits. We also know how to compute $e^{k}(x_{lo})$ and XOR $b$ copies of it into dirty ancillary qubits. Think of $\hat{f}(x_{hi})$ as a $b \times 2^{k}$ matrix, then all that remains is to return the correct column of $\hat{f}(x_{hi})$. If we also think of $e^{k}(x_{lo})$ as a length $2^{k}$ column vector, then we are computing a matrix/vector product. The simplest way to do this is to make $b$ copies of $e^{k}(x_{lo})$ and execute $b$ vector/vector products in parallel, at a cost of $O(2^k)$ Toffoli gates for each one. 
		
		To compute the Toffoli gates on dirty ancillary qubits, we decompose them into \textsc{Cnot} gates and single qubit gates. The layers of \textsc{Cnot} gates are linear, so it is possible to compute each such layer with dirty ancillary qubits.
		
		Computing $\hat{f}(x_{hi})$ uses $O(bN)$ ancillary qubits, $O(2^{n-k})$ \textsc{T} gates, and has depth $O((n-k)^2 + k + \log b)$. The rest of the circuit has $O(b 2^{k})$ ancillas to store $\hat{f}(x_{hi})$ and/or copies of $e^{k}(x_{lo})$, and uses $O(b 2^{k})$ \textsc{T} gates in depth $O(k^2 + \log b)$. Since the \textsc{T} gate count is $O(2^{n-k} + b 2^{k})$, we set $k$ such that $2^{k} \approx \sqrt{N/b}$, and it becomes $O(\sqrt{b N})$.
		
		It is possible to trade off depth and number of ancillary qubits. We only need $O(2^{n-k})$ qubits to store $e^{(n-k)}(x_{hi})$ in the computation of $\hat{f}(x_{hi})$, if we are willing to compute parities for each of the $b 2^k$ output bits of $\hat{f}(x_{hi})$ one at a time, in depth $O(b 2^{k})$. More generally, we can use $O(\lambda 2^{n-k})$ ancillary qubits for any integer $\lambda \in [1, b 2^k]$ and use depth $O(b 2^{k} / \lambda)$. For the optimal \textsc{T} count we use the same setting of $k$, giving $O(\sqrt{b N } / \lambda + \log^2 N)$ depth with $O(\lambda \sqrt{b N})$ ancillary qubits and $O(\sqrt{b N})$ \textsc{T} gates.
		
	\end{proof}
	
	\section{Pure-state preparation implementation details}
	\label{sec:pure_state}
	The approach by Shende, Bullock, and Markov~\cite{shende2006synthesis} synthesizes a unitary $A$ that prepares a pure state $A\ket{0}=\sum_{x=0}^{N}\frac{a_x}{\|\vec{a}\|_2}\ket{x}=\ket{\psi}$ with arbitrary coefficients in $N=2^n$ dimensions. The underlying circuit, illustrated below for the example of $N=8$ for positive coefficients,
	\begin{center}
		\includegraphics[scale=1.0,trim={0 0 8cm 0}, clip]{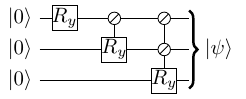}
	\end{center}
	is built from $j\in[n]$ multiply-controlled arbitrary single qubit rotations, $U_j$ where
	\begin{align}
	U_j=\sum_{x=0}^{2^{j}-1}\ketbra{x}{x}\otimes e^{i2\pi \theta_{j,x}Z},
	\end{align}
	for some set of rotation angles $\theta_{j,x}$. Note that it suffices to consider $Z$-phase rotations as rotations about the $X,Y$ Pauli operators are equivalent up to a single-qubit Clifford similarity transformation. Each multiplexor is applied twice -- once to create a pure state with the right probabilities $|a_x|^2$, and once to apply the correct phase $e^{i\arg{[a_x]}}$. Below we describe how $U_j$ may be implemented using the data-lookup oracle of~\cref{eq:standard_oracle}
	\begin{align}
	O\ket{x}\ket{0}\ket{0}=\ket{x}\ket{a_x}\ket{\mathrm{garbage}_x},
	\end{align}
	and evaluate the overall error and cost of state preparation.
	
	\subsection{Multiply-controlled phase gate from data lookup oracles}
	Consider a multiply-$n$-controlled arbitrary single qubit rotation
	\begin{align}
	\label{eq:multiplex-rotations-original}
	U=\sum_{x=0}^{2^{n}-1}\ketbra{x}{x}\otimes e^{i2\pi \theta_{x}Z},
	\end{align}
	\begin{center}
		\includegraphics[scale=1.0,trim={0 0 8.5cm 0}, clip]{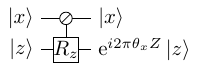}
	\end{center}
	where each rotation angle $\theta_x \in[0,1)$. Given a number state $\ket{x}$ and an arbitrary single-qubit state $\ket{z}$,
	this unitary performs a controlled-rotation
	\begin{align}
	U\ket{x}\ket{z}=\ket{x}e^{i2\pi \theta_{x}Z}\ket{z}.
	\end{align}
	Each rotation angle has a binary expansion
	\begin{align}
	\theta_x = \sum_{k=0}^{\infty}\frac{a_{x,i}}{2^{k+1}}.
	\end{align}
	By truncating to the above to $b$-bits of precision, we obtain an integer approximation $a_x$ of $\theta_x$ where
	\begin{align}
	a_x =2^b\sum_{k=0}^{b-1}\frac{a_{x,k}}{2^{k+1}},\quad \left|\frac{a_x}{2^b}-\theta_x\right|<\frac{1}{2^b}.
	\end{align}
	
	Let us encode these values of $a_x$ into the data-lookup oracle, and express its $\textsc{T}$ cost as the function $f(n,b)$. Its output is then
	\begin{align}
	O\ket{x}\ket{0}^{\otimes b}\ket{0}=\ket{x}\left[\bigotimes_{k=0}^{b-1}\ket{a_{x,k}}\right]\ket{\mathrm{garbage}_x},
	\end{align}
	where we explicitly represent the number state $\ket{a_x}$ in terms of its component qubits. 
	
	\subsubsection{Approach using arbitrary single-qubit synthesis}
	One possible approximation, call it $U'$, of $U$ in~\cref{eq:multiplex-rotations-original} applies the single-qubit rotation
	$e^{i\pi Z/2^{k}}$ to the target state $\ket{z}$, controlled by the state $\ket{a_{x,k}}$. The garbage register is then uncomputed by running $O$ in reverse. Explicitly, this circuit realizes the transformation
	\begin{align}\nonumber
	\label{eq:multiplex-rotation-sequence}
	&\ket{x}\ket{0}^{\otimes b}\ket{0}\ket{z} \\\nonumber
	\rightarrow& \ket{x}\left[\bigotimes_{k=0}^{b-1}\ket{a_{x,k}}\right]\ket{\mathrm{garbage}_x}\ket{z}\\\nonumber
	\rightarrow& \ket{x}\left[\bigotimes_{k=0}^{b-1}\ket{a_{x,k}}\right]\ket{\mathrm{garbage}_x}e^{i\pi \sum^{b-1}_{k=0} a_{x,k}Z/2^{k}}\ket{z}\\\nonumber
	= & \ket{x}\ket{a_x}\ket{\mathrm{garbage}_x}e^{i2\pi a_x/2^{b}Z}\ket{z}\\
	\rightarrow& \ket{x}\ket{0}^{\otimes b}\ket{0}e^{i2\pi a_x/2^{b}Z}\ket{z}.
	\end{align}
	Now, each controlled-arbitrary phase rotation decomposes into $2$ arbitrary single-qubit rotations, and $\textsc{CNot}$ gates -- Note that a decomposition into $1$ arbitrary single-qubit rotation is possible if we modify the above for the range $\theta_x\in(-1/2,1/2)$, but the explanation is slightly more complicated.	As any arbitrary single-qubit rotation is approximated to error $\epsilon$ using $\mathcal{O}(\log{(1/\epsilon)})$ $\textsc{T}$ gates~\cite{Kliuchnikov2013Ancilla}, we may use a triangle inequality to bound the error to
	\begin{align}\nonumber
	\|U-U'\|&\le \left\|\sum_{x=0}^{2^{n}-1}\ketbra{x}{x}\otimes (e^{i2\pi \theta_{x}Z}-e^{i2\pi a_x/2^bZ})\right\|+2b\epsilon\\\nonumber
	&\le\max_{|y|<1/2^b}\left|(e^{i2\pi y}-1)\right|+2b\epsilon \\
	&<\frac{2\pi}{2^b}+2b\epsilon.
	\end{align}
	Thus for any target error $\|U-U'\|=\delta$, we may solve for the $b,\epsilon$ parameters and bound the number of $\textsc{T}$ gates required to
	\begin{align}
	f(n,b)+\mathcal{O}(\log^2{(1/\delta)}), \quad b=\mathcal{O}(\log{(1/\delta)}).
	\end{align}
	
	\subsubsection{Approach using phase gradients}
	A more efficient approach~\cite{Gidney2018Addition} uses a Fourier state as a resource
	\begin{align}
	\mathcal{F}=\frac{1}{\sqrt{2^b}}\sum_{k=0}^{2^b-1}e^{-2\pi i k/2^b}\ket{k}.
	\end{align}
	combined with a reversible adder
	\begin{align}
	\textsc{Add}\ket{x}\ket{y}&=\ket{x}\ket{{y+x}\mod{2^b}}
	\end{align}
	Observe that
	\begin{align}
	\textsc{Add}\ket{x}\ket{\mathcal{F}}=e^{2\pi i x/2^b}\ket{x}\ket{\mathcal{F}}.
	\end{align}
	Thus the controlled adder, known to cost $\mathcal{O}(b)$ $\textsc{T}$ gates, 
	\begin{align}
	\textsc{Cadd}=\textsc{Add}\otimes\ket{0}\bra{0}+\textsc{Add}^\dag\otimes\ket{1}\bra{1}
	\end{align}
	realizes the controlled-phase rotation
	\begin{align}
	\textsc{Cadd}\ket{x}\ket{\mathcal{F}}\ket{z}=\ket{x}\ket{\mathcal{F}}e^{2\pi i x/2^b Z}\ket{z}.
	\end{align}
	
	Thus another possible approximation, call it $U''$, uses this adder, controlled by the target state $\ket{z}$, to the registers containing the desired phase rotation $\ket{a_x}$, and the Fourier state. This realizes the same transformation in~\cref{eq:multiplex-rotation-sequence}, using the circuit depicted below.
	%	\begin{widetext}
	\begin{center}
		\includegraphics[scale=0.85]{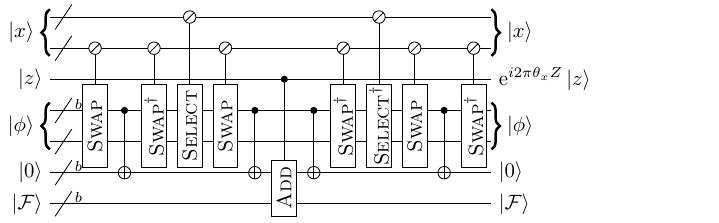}
	\end{center}
	%	\end{widetext}
	Assuming that the Fourier state is prepared perfectly, this approximates $U$ with error
	\begin{align}
	\|U-U''\|&\le \left\|\sum_{x=0}^{2^{n}-1}\ketbra{x}{x}\otimes (e^{i2\pi \theta_{x}Z}-e^{i2\pi a_x/2^bZ})\right\|\le\max_{|y|<1/2^b}\left|(e^{i2\pi y}-1)\right|<\frac{2\pi}{2^b}.
	\end{align}
	The state preparation unitary $A''$ applies $n$ such approximations $U_j''$ to the multiply-controlled rotations $U_j$, leading to an overall error bounded by
	\begin{align}
	\left\|\prod_{j=0}^{n-1}U_j-\prod_{j=0}^{n-1}U_j''\right\|\le \sum_{j=0}^{n-1}\|U_j-U_j''\|<\frac{2n\pi}{2^b}.
	\end{align}
	Note that the cost of approximating the Fourier state to error $\epsilon$ has a T cost of $\mathcal{O}(b\log{(1/\epsilon)})$. This imperfect Fourier state $\ket{\mathcal{F}'}$ contributes a one-time error of $\epsilon$, following the inequality
	\begin{align}
	\|W(\ket{\mathcal{F}'}\otimes\mathbb I)-(W\ket{\mathcal{F}}\otimes\mathbb{I}\|\le \|\ket{\mathcal{F}'}-\ket{\mathcal{F}}\|\le \epsilon,
	\end{align}
	for any arbitrary unitary operator $W$. Thus $A''$ prepares the state $\ket{\psi''}=A''\ket{0}\ket{\mathcal{F}'}$, which approximates $\ket{\psi}$ with error
	\begin{align}
	\delta=\|\ket{\psi''}-\ket{\psi}\ket{\mathcal{F}}\|\le \frac{2\pi n}{2^b}+\epsilon.
	\end{align}
	The total error $\delta$ may be controlled by choosing $b=\Theta(\log{(\frac{n}{\delta})})$ and $\epsilon=\Theta(\delta)$.

	The total $\textsc{T}$ cost of arbitrary state preparation is then then the sum of costs of the data-lookup oracles $f(j,b)=\mathcal{O}(\frac{2^j}{\lambda}+b\lambda)$, adders, and the Fourier state:
	\begin{align}\nonumber
	\textsc{T}\;\text{cost}=&\left(\sum_{j=0}^{n-1}f(j,b)+\mathcal{O}(b)\right)+\mathcal{O}(b\log{(1/\epsilon)})\\\nonumber
	&=\mathcal{O}(2^n/\lambda+nb\lambda+nb+b\log{(1/\epsilon)})\\\nonumber
	&=\mathcal{O}\left(\frac{N}{\lambda}+b(\lambda n + \log{(1/\delta)})\right)\\\nonumber
 &=\mathcal{O}\left(\frac{N}{\lambda}+\log{\left(\frac{\log{N}}{\delta}\right)}\left(\lambda \log{N}+\log{(1/\delta)}\right)\right)
 \\\nonumber
 &=\mathcal{O}\left(\frac{N}{\lambda}+\lambda\log{\left(\frac{\log{N}}{\delta}\right)} \log{\left(\frac{N}{\delta}\right)}\right)
 \\
 &=\mathcal{O}\left(\frac{N}{\lambda}+\lambda \log^2{\left(\frac{N}{\delta}\right)}\right).
	\end{align}

\end{document}